\documentclass[11pt,letterpaper]{article}
\usepackage{authblk}
\usepackage[margin=1in]{geometry}
\usepackage[bottom]{footmisc}
\usepackage{amsmath,amssymb,amsfonts}
\usepackage{amsthm}
\usepackage{thm-restate}
\usepackage{thmtools}
\usepackage{microtype}
\usepackage{tikz}
\usepackage{float}
\newenvironment{claimproof}[1][Proof]{\begin{proof}[#1]}{\end{proof}}
\usepackage{xcolor}
\usetikzlibrary{positioning,calc}
\definecolor{cbBlue}{HTML}{0072B2}
\definecolor{cbOrange}{HTML}{E69F00}
\usepackage[hidelinks]{hyperref}
\usepackage[nameinlink,noabbrev]{cleveref}

\title{A Tight $\widetilde \Omega \left(\sqrt{m}\right)$ Information-Theoretic Lower Bound for \\
	Randomized Online Set Cover}
\author[1]{Ilan Doron-Arad}
\author[2]{Joseph (Seffi) Naor}
\affil[1]{MIT}
\affil[2]{Technion}
\date{}

\newcommand{\OPT}{\mathsf{OPT}}
\newcommand{\eps}{\varepsilon}

\theoremstyle{plain}
\newtheorem{theorem}{Theorem}[section]
\newtheorem{lemma}[theorem]{Lemma}

\newtheorem{claim}[theorem]{Claim}

\theoremstyle{definition}
\newtheorem{definition}[theorem]{Definition}
\newtheorem{question}[theorem]{Question}



\crefname{theorem}{theorem}{theorems}
\Crefname{theorem}{Theorem}{Theorems}
\crefname{lemma}{lemma}{lemmas}
\Crefname{lemma}{Lemma}{Lemmas}
\crefname{corollary}{corollary}{corollaries}
\Crefname{corollary}{Corollary}{Corollaries}
\crefname{claim}{claim}{claims}
\Crefname{claim}{Claim}{Claims}
\crefname{observation}{observation}{observations}
\Crefname{observation}{Observation}{Observations}
\crefname{definition}{definition}{definitions}
\Crefname{definition}{Definition}{Definitions}
\crefname{question}{question}{questions}
\Crefname{question}{Question}{Questions}

\begin{document}
	\hypersetup{pageanchor=false}
	
	\begin{titlepage}
		\maketitle
		\thispagestyle{empty}
		
		\begin{abstract}
			Online set cover is a fundamental problem in online algorithms, admitting a deterministic
			$O(\log m\log n)$-competitive algorithm, where $m$ is the number of sets and $n$ is the number of elements.
			This is essentially tight for deterministic algorithms as well as for polynomial-time randomized algorithms assuming $\mathrm{NP}\not\subseteq\mathrm{BPP}$.
			However, the best lower bound known for information-theoretic (computationally unlimited) randomized algorithms against an oblivious adversary is only $\Omega(\log m)$, whereas the upper bound in terms of $m$ is $O(\sqrt m\log m)=\widetilde O(\sqrt m)$.
			
			We prove an $\Omega(\sqrt m)$ lower bound for information-theoretic randomized unweighted online set cover, showing that even with unbounded computational power, randomization cannot achieve an $O(\log m)$-competitive ratio.
			Specifically, our lower bound rules out $O(\log m\cdot\log^{1/2-\eps} n)$-competitive algorithms for every constant $\eps>0$.
			Our techniques also prove an $\Omega(m^{1/3})$ lower bound in the random-order model, and show that every algorithm with $\textnormal{poly}(m)$ memory has competitive ratio $\Omega(m/\log m)$, even with unlimited computation between requests.
		\end{abstract}
	\end{titlepage}
	\hypersetup{pageanchor=true}
	
	\section{Introduction}
	
	Set cover is a textbook combinatorial optimization problem: given a universe $\mathcal U$ of $n$ elements and a family $\mathcal S$ of $m$ subsets of $\mathcal U$, the goal is to find a minimum-cardinality subfamily whose union is $\mathcal U$.
	Set cover is one of Karp's original $21$ NP-complete problems~\cite{Karp72}, and its best polynomial-time approximation factor is $\Theta(\log n)$ assuming $\textnormal{P} \neq \textnormal{NP}$, achievable by a simple greedy algorithm~\cite{Chvatal79,Johnson74,Lovasz75,Feige98,dinur2014analytical}.
	Denote by $\OPT$ the value of an optimal solution for a set cover instance.
	
	Online set cover, introduced by Alon, Awerbuch, Azar, Buchbinder, and Naor~\cite{AlonAwerbuchAzarBuchbinderNaor09}, has established itself as a fundamental problem in online algorithms.
	In the online case, at each time $t$, an element arrives; if it is not already covered, the algorithm must irrevocably buy a set covering the element.
	In competitive analysis, the guarantee of an online algorithm is measured against an optimal offline solution for the revealed sequence.
	For randomized algorithms, we consider expected competitiveness against an oblivious adversary who does not adapt to the algorithm's random choices and fixes the input sequence in advance. The online
	set cover problem has been studied in both the {\em known-instance model}, where the set system is known in advance, but not all elements will necessarily appear, and the {\em unknown-instance model}, where the set system is revealed online.
	
	Trying to match the offline $\Theta(\log n)$ bound also for online set cover may seem tempting; however, tight $\Theta(\sqrt n)$ bounds 
	already rule this out for both deterministic and randomized algorithms~\cite{AlonAwerbuchAzarBuchbinderNaor09,GuptaKehneLevin21}.
	Yet, there is a silver lining: by allowing the competitive ratio to also depend on the number of sets $m$, the work of~\cite{AlonAwerbuchAzarBuchbinderNaor09} obtained a deterministic $O(\log m\log n)$-competitive algorithm for the known-instance model and proved a lower bound of
	\(
	\Omega\!\left(\frac{\log m\log n}{\log\log n+\log\log m}\right)
	\)
	for deterministic algorithms, leaving only a gap of $\log\log n+\log\log m$ between them. In the unknown-instance model, randomized algorithms achieve an $O(\log m\log n)$ competitive ratio, but deterministic algorithms have separate $\Omega(m)$ and $\Omega(n)$ lower bounds even when $\OPT=1$~\cite{Korman04}; hence, our focus is the known-instance model. 
	
	For both models, the randomized algorithm of~\cite{AlonAwerbuchAzarBuchbinderNaor09} is based on the \emph{relax-and-round} framework: first, a deterministic $O(\log m)$-competitive algorithm is designed for fractionally solving online a linear programming relaxation of set cover; then, randomized rounding is applied online to this solution. For the known-instance model, it is also possible to apply an online de-randomization scheme, yielding the deterministic bound.
	As the integrality gap of the LP relaxation is $\Omega(\log n)$, and there is an $\Omega(\log m)$ lower bound on the fractional competitive ratio of online set cover, any improvement to the $O(\log m\log n)$ competitive factor of~\cite{AlonAwerbuchAzarBuchbinderNaor09} requires new ideas.
	
	In online algorithms, randomization often brings dramatically stronger guarantees compared to their deterministic counterparts, as in paging~\cite{FiatEtAl91}, or the $k$-server problem~\cite{BansalBuchbinderMadryNaor15,bubeck-k-server2018}.
	A partial resolution came in the work of Feige and Korman~\cite{Korman04}, who showed that under the standard complexity assumption $\mathrm{NP}\not\subseteq\mathrm{BPP}$ no polynomial-time randomized algorithm can achieve a competitive ratio of $o(\log m\log n)$ for online set cover.
	In other words, for polynomial-time algorithms, randomization is unlikely to help.
	
	While polynomial-time algorithms are a main focus in both offline and online algorithmic settings, the online setting fundamentally asks how well one can make decisions under an unpredictable future.
	From this viewpoint, understanding the best information-theoretic (i.e., computationally unlimited) competitive guarantee is of central importance.
	From this perspective, prior to our work, the strongest lower bound known for randomized online set cover, with no restriction on running time, was only $\Omega(\log m)$~\cite{AlonAwerbuchAzarBuchbinderNaor06,Korman04}.
	Gupta, Kehne, and Levin~\cite{GuptaKehneLevin21} also proved an $\Omega(\log m)$ lower bound that works even for random-order online set cover.\footnote{The bound in Theorem V.1 of~\cite{GuptaKehneLevin21} is stated as $\Omega(\log n)$. As the proof in the full version~\cite{gupta2021random} uses $m = 2^{\ell}$ and $n = 2^{\ell} - 1$, the resulting lower bound in terms of $m$ remains only $\Omega(\log m)$.} 
	Conversely, the current best upper bound for randomized information-theoretic algorithms is astonishingly larger: an $\widetilde O(\sqrt m)$-competitive algorithm by~\cite{Korman04}.\footnote{Korman~\cite{Korman04} gives an $O(\OPT\log(m/\OPT))$ upper bound; if $\OPT>\sqrt m$, any feasible algorithm is $\sqrt m$-competitive.}
	This leads us to the following basic question:
	
	\begin{question}
		\label{q:1}
		\emph{Is there an information-theoretic randomized algorithm for the online set cover problem in the known-instance setting that achieves an $O(\log m)$ competitive ratio? 
		}
	\end{question}
	We note that the lower-bound instances of~\cite{AlonAwerbuchAzarBuchbinderNaor09} satisfy $\OPT=1$. Interestingly, such instances are known to admit randomized $O(\log m)$-competitive algorithms~\cite{Korman04} (see Section~\ref{sec:prelim} for more details). Hence, any strengthening of the $\Omega(\log m)$ lower bound has to utilize instances for which $\OPT>1$. 
	
	\subsection{Our Results}
	
	Our main result is a qualitatively tight lower bound that gives a stark negative answer to \Cref{q:1}.
	We show that even with randomization and unbounded computational power, a polynomial-in-$m$ dependence of the competitive ratio is unavoidable. 
	
	\begin{restatable}{theorem}{main}
		\label{thm:randomized-lb-osc}
		Every randomized algorithm for unweighted online set cover has competitive ratio
		$\Omega(\sqrt m)$, even in the known-instance model.
	\end{restatable}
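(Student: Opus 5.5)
The plan is to use Yao's minimax principle: construct a distribution $\mathcal D$ over input sequences on a single fixed, known set system such that $\mathbb E[\OPT]=O(\sqrt m)$, and show that every \emph{deterministic} online algorithm pays $\Omega(m)$ in expectation over $\mathcal D$.

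Since instances with $\OPT=1$ admit $O(\log m)$-competitive randomized algorithms, the construction must plant a hidden cover of size $k:=\sqrt m$.

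\textbf{Set system.} I will use the known-instance model in its strongest form. For every family $F\subseteq\mathcal S$ of a prescribed type, the universe contains an element $e_F$ whose covering family is exactly $F$. This makes $n$ exponential in $m$, which is consistent with the claimed $\log^{1/2-\eps}n$ form. The adversary then requests elements simply by naming covering families.

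\textbf{Input distribution.} Index the $m$ sets by $[k]\times[k]$, so that there are $k$ "blocks" of $k$ sets each. Draw a hidden transversal $H=\{h_1,\dots,h_k\}$ consisting of one uniformly random set per block; $H$ will be the planted cover, so $\OPT\le k$. Then release $T$ elements, where $T$ is polynomial in $m$. Each element has a covering family $F_t$ that contains exactly one set of $H$, and is otherwise a uniformly random family drawn so that every set of $\mathcal S$ has the same marginal probability $s/m$ of lying in $F_t$. A natural choice is to take $F_t$ as a random "line" through the $k\times k$ grid that hits each block exactly once, with size $s=k$. The key design requirement is \emph{marginal uniformity}: conditioned on any history, every set's frequency of appearance in the observed families reveals (almost) nothing about membership in $H$.

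\textbf{Indistinguishability.} I will show that the posterior distribution of $H$, given the first $t$ families, stays close to uniform within each block. Concretely, for any set $S$ not yet determined, $\Pr[S\in H\mid F_1,\dots,F_t]\le O(1/k)$ until $\Omega(\cdot)$ many elements have arrived in that block. I would prove this by a counting/symmetry argument: the number of transversals consistent with the observed lines is nearly the same for every candidate in a block. This is essentially a planted-versus-random statement for the line design.

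\textbf{Cost accounting.} Given indistinguishability, each newly requested element is covered by the algorithm's current purchases with probability about $1-(1-s/m)^{B}$, where $B$ is the number of sets it has bought so far. The only exception is when the algorithm happens to own the hidden set of that element, and that happens with probability $O(B/m)$ by the posterior bound.

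For the algorithm to leave uncovered only an $o(1)$ fraction of the $T$ elements, it must therefore either buy $\Omega(m/s)$ sets per "fresh randomness epoch" or effectively learn $H$. Summing over $k$ epochs, chosen so that the families in different epochs are nearly independent, then gives $\mathbb E[\mathrm{ALG}]=\Omega(k\cdot m/s)=\Omega(m)$, while $\OPT\le k$. The ratio is therefore $\Omega(\sqrt m)$.

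\textbf{Main obstacle.} The hardest step is the indistinguishability lemma. Repeated appearances of a set across many families are exactly the statistic an unbounded algorithm would exploit to identify $H$ (which is why $\OPT=1$ instances are easy). The design of $F_t$ must therefore make co-occurrence statistics as well as marginals independent of $H$, up to the horizon $T$.

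My first attempt would be to choose the $F_t$ from an orthogonal-array or affine-plane structure over $\mathbb F_q$ with $q\approx k$, so that the conditional law of the lines given $H$ is a uniform mixture that is exactly invariant under the within-block permutations fixing the already-bought sets. I would then bound the leakage per element by a KL-divergence/chain-rule argument, showing roughly $O(1/k)$ bits per element about each block.
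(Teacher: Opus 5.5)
\textbf{There is a genuine gap, and it comes before your indistinguishability lemma.} Your plan would fail even if $H$ were perfectly hidden.
\begin{itemize}
\item \emph{The concrete design.} Every $F_t$ meets every block exactly once. So the deterministic algorithm that buys all $k$ sets of a single block covers every possible request at cost $k=\sqrt m$. Hence the claimed $\mathbb E[\mathrm{ALG}]=\Omega(m)$ is false.
\item \emph{The general framework.} You require that appearance frequencies reveal nothing about membership in $H$, while $|F_t\cap H|=1$. This forces $k\cdot s/m=\mathbb E[|F_t\cap H|\mid H]=1$, i.e.\ $s=\sqrt m$. Now buy each set independently with probability $\min\{1,3\ln T/\sqrt m\}$ before anything arrives. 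Any fixed request is left uncovered with probability at most $T^{-3}$, so the expected cost is $O(\sqrt m\log T)$.
\item \emph{Polynomially many requests.} With only $T$ requests, fractional primal--dual plus threshold rounding (Alon et al.) is $O(\log m\log T)$-competitive. The reason is that the union bound runs only over requested elements. So with $T=\mathrm{poly}(m)$, no input distribution yields more than $O(\log^2 m)$. An $\Omega(\sqrt m)$ bound needs $\exp(\widetilde\Omega(\sqrt m))$ requests.
\item \emph{Where your accounting breaks.} The faulty step is ``summing over $k$ epochs.'' Purchases are irrevocable and keep hitting fresh random families at the same rate. New randomness never makes old purchases useless, so per-epoch costs do not add.
\end{itemize}

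\textbf{The missing idea is a constraint that can render past purchases obsolete.} The paper uses a conjunctive one.
\begin{itemize}
\item \emph{Forcing batches.} $F_r(\mathcal D)$ lists every $u_A$ with $|[m]\setminus A|\le r$ and $A$ meeting every member of $\mathcal D$; these are exponentially many dense elements. A purchased family of size at most $r$ covers this batch if and only if it contains an entire candidate $W\in\mathcal D$.
\item \emph{Random halving.} In each round the adversary keeps a uniformly random half of the candidate family and presents its forcing batch. The algorithm's current witness then dies with probability $1/2$, whatever the algorithm knows. The algorithm may as well know $\mathcal C^{(t)}$ exactly. Hardness comes from the unpredictable future halving, not from hiding a planted cover.
\item \emph{Counting witnesses.} Over $4q$ rounds, a Chernoff bound gives at least $q$ rounds in which the witness dies, with high probability. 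If $|X_T|\le r$, these are $q$ distinct candidates inside $X_T$.
\item \emph{Union expansion.} The paper uses a union-expanding family: any $q\approx\sqrt m$ distinct $\sqrt m$-subsets have union larger than $m/32$, shown by a first-moment count. This forbids the previous situation, so $\mathbb E|X_T|=\Omega(m)$. Meanwhile the final surviving candidate gives $\OPT\le\sqrt m$.
\end{itemize}
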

	
	Thus, the lower bound matches the $\widetilde O(\sqrt m)$ upper bound of~\cite{Korman04} up to a logarithmic factor.
	In particular, since $n=2^m-1$ in our construction, \Cref{thm:randomized-lb-osc} rules out any information-theoretic randomized $O(\log m\cdot\log^{1/2-\eps}(n))$-competitive algorithm that applies to all instances, for every constant $\eps>0$.
	Indeed, when $n=2^m-1$,
	$\log m\cdot\log^{1/2-\eps}(n)=o(\sqrt m)$, showing that either a dependence on $n$ or a polynomial dependence on $m$ are essential.

	It still remains open whether there is an information-theoretic randomized $O(\log(mn))$-competitive algorithm for adversarial-order online set cover.
	We give more details in Section~\ref{sec:discussion}. Note that obtaining an $O(\log(mn))$-competitive algorithm for arbitrary values of $m$ and $n$ does not contradict our $\Omega(\sqrt{m})$ lower bound, as in our construction we set $n = 2^m-1$, and thus $\log(mn) = \Theta(m)$. Hence, obtaining the latter bound is trivial in this regime. 
	
	\subsubsection*{Implications.}
	Numerous covering problems are solvable online via the primal-dual approach of~\cite{buchbinder2009design}.
	As mentioned before, for online set cover this leads to an $O(\log m)$-competitive guarantee for solving the LP relaxation fractionally, where $m$ refers to the number of variables.
	An additional factor of $\log n$, where $n$ is intuitively the number of constraints of the given problem, is incurred when rounding online the fractional solution and obtaining a feasible integral solution.
	Nevertheless, it has remained an open question whether a dependence on $n$ (or, equivalently, in the worst case, a polynomial dependence on $m$) is information-theoretically necessary for many problems.
	
	\Cref{thm:randomized-lb-osc} resolves this question, as it shows a polynomial, rather than logarithmic, dependence on the parameter analogous to $m$ for various online covering problems for which unweighted online set cover is either a special case, or there is a direct reduction from online set cover.
	In particular, this lower bound applies to online non-metric facility location~\cite{AlonAwerbuchAzarBuchbinderNaor06}, admission control~\cite{AlonAzarGutner09}, online node-weighted Steiner tree~\cite{NaorPanigrahiSingh11}, online prize-collecting Steiner problems~\cite{HajiaghayiLiaghatPanigrahi14}, online multistage matroid maintenance~\cite{GuptaTalwarWieder14}, online submodular cover~\cite{GuptaLevin20}, set cover with delay~\cite{AzarChiplunkarKuttenTouitou20}, and online weighted multi-level paging~\cite{BansalNaorTalmon21}.
	Additional examples of similar implications include~\cite{DBLP:conf/focs/AzarBCCCG0KNNP16,MertziosShalomWongZaks17,touitou2021nearly,DoronAradNaor24,BermanDasGupta08,BhawalkarGollapudiPanigrahi14,abshoff2016towards,borst2026buy,DoronAradGalNaor26}.
	
	Our techniques also give lower bounds for random order and space-efficient algorithms on which we now elaborate.

	\paragraph{Random-order.}
	In the random-order model, the adversary still chooses the set of element requests, but their arrival order is not adversarial: the requests are revealed according to a permutation chosen uniformly at random.
In this random-order model, Gupta, Kehne, and Levin~\cite{GuptaKehneLevin21} obtained a randomized $O(\log (mn))$-competitive algorithm. They also showed a lower bound of $\log m$ for $m \approx n$ and $\frac{\log m }{\log \log m}$ even when $m \gg n$. We focus on the complementary regime where $m \ll n$, and show that a polynomial dependence on $m$ (equivalently, a polylogarithmic dependence on $n$) is necessary in the random-order model.   
	
	
	\begin{restatable}{theorem}{randomorder}
		\label{thm:random-order-lb}
		Every randomized algorithm for unweighted online set cover in the known-instance random-order model has competitive ratio $\Omega\left(m^{1/3}\right)$. 
	\end{restatable}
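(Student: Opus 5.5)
We argue via Yao's principle: we fix a distribution over request sequences on a single known instance and show that every deterministic online algorithm pays $\Omega(m^{1/3})$ times $\OPT$ in expectation. The known instance is the one behind \Cref{thm:randomized-lb-osc}. It has $m$ sets $S_1,\dots,S_m$ and one element $e_T$ for every nonempty $T\subseteq[m]$, with $e_T\in S_i$ iff $i\in T$ (so $n=2^m-1$). Requesting $e_T$ therefore means ``buy some set indexed in $T$, unless one is already bought''. Since the instance contains every such element, we are free to design the random set of requests. The random permutation only scrambles their order, and our distribution will be exchangeable, so the random order costs us nothing extra.

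\textbf{Hard distribution.} Pick a uniformly random hidden family $H\subseteq[m]$ of size $k$; it will be the optimal cover, so $\OPT\le k$. Issue $N$ requests $e_{T_1},\dots,e_{T_N}$, where $T_j=\{h_j\}\cup R_j$. Here $h_j$ is uniform in $H$, and $R_j$ is a fresh uniformly random decoy set of size $r$ drawn from $[m]\setminus H$. All requests are i.i.d., so the uniformly random arrival order is irrelevant: conditioned on the multiset, the sequence is already exchangeable. This is why the random-order model can only be handled by the ``fresh decoys'' idea, and not by the nested or adaptive phase structure one would use against an adversarial order.

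\textbf{Cost accounting.} Let $B$ be the set of indices bought so far, with $|B|=b$. A new request is already covered iff $h_j\in B$ or $R_j\cap B\neq\emptyset$. The second event has probability at most $br/m$. The first has probability $|B\cap H|/k$. So, as long as $b\ll m/r$, the algorithm saves a request essentially only by owning elements of $H$. We will bound how quickly $H$-elements can be identified. An index $i\in H$ appears in about $N/k$ requests. A decoy index appears in about $Nr/m$ requests, and any two requests share about $r^2/m$ decoys. Consider the requests seen so far that share a common $h$. The algorithm can distinguish $h$ from decoys only once the number of such requests, $s$, is large enough that the intersection of their decoy sets, of size about $r(r/m)^{s-1}$, shrinks below $1$. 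Until then, the posterior on which index in $T_j$ is the hidden one is spread over many candidates. Consequently each newly bought set lies in $H$ with probability only about $1/(\text{number of candidates})$. We would formalize this by a conditional-expectation (posterior counting) argument. Fixing the history, by symmetry every index in $T_j$ with the same occurrence pattern is equally likely to be $h_j$.

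\textbf{Choice of parameters.} We set $r\approx m^{2/3}$ and $k\approx m^{1/3}$, and choose $N$ so that each $h\in H$ occurs about a constant or $\log$ number of times. The regime to reach is the following. Buying fewer than about $m^{1/3}\cdot k$ sets leaves most requests uncovered, each forcing a purchase. Buying more already costs $\Omega(m^{1/3})\cdot\OPT$. The decoy-hit probability $br/m$ stays $o(1)$ throughout, which is where the cube root, rather than the square root, appears. \textbf{Main obstacle.} The hardest step is the information-theoretic claim that the algorithm cannot cheaply locate elements of $H$. We must rule out clever purchases exploiting repeated decoys across requests, since pairs of requests can share decoys of size $r^2/m\approx m^{1/3}$. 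We would first try to show, by a union bound over the pairwise intersections, that with high probability the ``occurrence pattern'' of $h$ is indistinguishable from that of at least $m^{1/3}$ decoys in $T_j$ at the time $e_{T_j}$ arrives. The remainder would then be a careful tuning of $r,k,N$.
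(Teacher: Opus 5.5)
\textbf{The gap.} The step that fails is your claim that buying fewer than about $m^{1/3}k$ sets leaves most requests uncovered. Your distribution is in fact easy for trivial algorithms, for two reasons.

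First, no algorithm is ever forced to buy more sets than there are requests. With each $h\in H$ occurring $O(\log m)$ times you have $N=O(k\log m)\ll m^{1/3}k$. Since $rk\approx m$, every index (hidden or decoy) lies in only $O(\log m)$ requests w.h.p. So $\OPT\ge N/O(\log m)$, and the algorithm that buys an arbitrary set for each uncovered request is already $O(\log m)$-competitive on your distribution.

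Second, decoys help the algorithm far more than you assume. A request has $r+1\approx m^{2/3}$ indices, so $b$ uniformly random purchases miss it with probability at most $e^{-b(r+1)/m}$. Hence $b=\lceil 2(m/r)\ln N\rceil=O(k\log N)$ random purchases made up front cover all $N$ requests except with probability $1/N$. Your condition $br/m=o(1)$ holds only for $b\ll m/r\approx k$, i.e.\ below $\OPT$, not all the way up to $b\approx m^{1/3}k$.

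So with $\mathrm{poly}(m)$ requests your target cost $\Omega(m^{1/3}k)$ is unreachable. One needs $\log N=\Omega(m^{1/3})$, i.e.\ exponentially many requests; this is consistent with the $O(\log(mn))$ random-order algorithm of Gupta, Kehne, and Levin. In that regime each $h$ appears exponentially often, so your occurrence-pattern/posterior heuristic no longer describes the instance.

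\textbf{The missing idea.} It is exactly the one you ruled out: the nested adversarial filtration does survive random order if later phases are exponentially rarer. The paper proceeds as follows.
\begin{enumerate}
\item It takes a core $[d]$ with $d=\Theta(m^{2/3})$ and a $(d,\lfloor\sqrt d\rfloor,\lceil\sqrt d\rceil,\lfloor d/32\rfloor)$-union-expanding family of size $2^T$, with random halvings $\mathcal C^{(t)}$.
\item For each level $t<T$ and nonempty $A\subseteq[d]$ it creates $L_t=D^{T-t}$ requests with core pattern $A$, or the harmless pattern $[d]$ if $u_A\notin F_r(\mathcal C^{(t)})$. Each request is padded with a distinct random subset of the auxiliary sets $[m]\setminus[d]$.
\item It sets $s=\lceil 100\cdot2^r d\log d\rceil$, $D=1+\lceil100s2^dT\rceil$, and gives level $t$ the priority threshold $z_t=s/L_t$.
\end{enumerate}
The good event $\mathcal B$ (probability $>9/10$) says two things. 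No later-level request has priority at most $z_t$. And for every $A$ and every set $Y$ of at most $r$ auxiliary indices, some level-$t$ request with core $A$ and priority at most $z_t$ avoids $Y$.

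So the auxiliary sets are dense decoys, but calibrated so that $r=\Theta(d)$ of them can never defuse a level. Any purchase of at most $r$ sets must cover $F_r(\mathcal C^{(t)})$ on the core, and hence contains a witness from $\mathcal C^{(t)}$. The filtration lemma then yields cost $\Omega(d)$ against $\OPT\le\sqrt d$. The cube root comes from needing $m-d=\Theta(d^{3/2})$ auxiliary sets to give the $2^{\Theta(d^{3/2})}$ requests distinct incidence patterns, not from a decoy-hit probability.
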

	This lower bound holds 
	even when no two elements are covered by exactly the same collection of sets, i.e., without {\em element copies}. With element copies, the lower bound strengthens to $\Omega(\sqrt m)$ on instances with $n=\exp(O(m^{3/2}))$. 
	Since set cover is a special case of set multicover, pure covering integer programs, and online submodular cover, \Cref{thm:random-order-lb} gives an analogous lower bound for standard random-order versions of these problems~\cite{GuptaKehneLevin21,GuptaKehneLevin24,GuptaLevin20}.
	
	\paragraph{An $\widetilde \Omega (m)$-competitive lower bound for $\textnormal{poly}(m)$ space algorithms.}
    Ideally, in online algorithms, besides minimizing the competitive ratio, we seek algorithms whose memory is either small or independent of the length of the online request sequence. Such algorithms do exist for other fundamental online problems; for example, online (unweighted) paging~\cite{FiatEtAl91}, where the space and time complexity depend only on the cache size, apart from the logarithmic space needed to represent page identifiers, and thus does not grow with the length of the online sequence.
    
Thus, it is natural to measure the complexity of online set cover as a function of the number of available (fixed) resources $m$, and aim at avoiding a polynomial dependence of the memory on the sequence length (which can exponentially depend on $m$). Here, memory is the information preserved between requests; we assume that past requests cannot be reread if not stored in memory. It seems that this aspect of the online set cover problem has not been studied before. 

We show that in contrast to the information-theoretic setting, if the space complexity is only polynomial in the number of sets (i.e., $\textnormal{poly(m)}$), a competitive bound of $\widetilde O({\sqrt{m}})$ is no longer possible and the best competitive ratio that can be achieved is only $\widetilde \Omega (m)$. 
	
	\begin{restatable}{theorem}{spacelb}
		\label{thm:space-lb}
		Every randomized algorithm for unweighted online set cover that uses $\textnormal{poly}(m)$ memory has competitive ratio
		$\Omega(m/\log m) = \widetilde{\Omega}(m)$, even in the known-instance model and even with unlimited computation between consecutive requests.
	\end{restatable}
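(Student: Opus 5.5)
We will prove \Cref{thm:space-lb} by combining a hard distribution of instances with a counting (pigeonhole) argument on memory states. By Yao's principle, it suffices to build a distribution over request sequences on a fixed known set system with $m$ sets. Against this distribution, every deterministic algorithm with $s=\textnormal{poly}(m)$ bits of memory should have expected cost $\Omega(m/\log m)$, while $\OPT=O(1)$ always. Randomized algorithms with $\textnormal{poly}(m)$ memory are distributions over such deterministic machines, with random bits either hard-wired or counted in the memory. So the lower bound transfers to them.

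The set system is the one used for \Cref{thm:randomized-lb-osc}. Sets are indexed by $[m]$, and for every nonempty $T\subseteq[m]$ there is an element $e_T$ covered exactly by the sets in $T$, so $n=2^m-1$.

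\textbf{Hard distribution.} Draw a uniformly random hidden set $S^\ast\in[m]$. Then present a long sequence of phases, of length exponential in $m$. Each phase reveals an element $e_T$, where $T$ is a random subset of size about $m/2$ that contains $S^\ast$. The family of possible $T$'s must be rich enough that the offline optimum is $1$, since $S^\ast$ covers everything. We also want any fixed small collection of sets to be avoided by a fresh $T$ with noticeable probability. The algorithm has seen exponentially many requests, which together pin down $S^\ast$ information-theoretically. But after each request it retains only $s$ bits.

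The heart of the argument is to show that a request sequence can be arranged so that this retained information about $S^\ast$ is useless. One natural arrangement interleaves ``decoy'' blocks. Concretely, each block is a fresh random sequence that is consistent with many candidates for $S^\ast$. The intersection of its sets shrinks to a single index only at the block's last request, and the block's relevant set is redrawn independently.

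\textbf{Counting argument.} Suppose at some time the algorithm has bought only $k=o(m/\log m)$ sets. A fresh request whose sets avoid the purchased ones forces a new purchase. Within a phase, consider the memory state $\sigma\in\{0,1\}^s$ at the start of the phase. The algorithm's future purchases are a deterministic function of $\sigma$ and the phase's requests. Knowing $\sigma$ reveals at most $s$ bits about the phase's hidden set $S^\ast_{\text{phase}}$, but the elements revealed so far carry far more than $s$ bits. We would instead use a Yao-style argument within each block. There are $2^s$ possible states, and a state can ``hedge'' by having bought some $k$ sets in advance. We then argue that the algorithm must buy $\Omega(m/\log m)$ sets before hitting $S^\ast$. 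Each request whose set $T$ avoids all currently bought sets forces a purchase. Hence, with $b$ sets bought so far, the adversary can continue whenever the candidate space restricted to the unbought sets is large. Each purchase halves the candidates at best, with an $O(\log m)$ loss per purchase coming from the $\textnormal{poly}(m)$-state union bound. That yields $m/\log m$ forced purchases.

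\textbf{Main obstacle.} I expect the main obstacle is making the memory restriction bite. An unbounded-memory algorithm can intersect all past $T$'s and buy $S^\ast$ after $O(\log m)$ mistakes, which is the source of Korman-type $O(\log m)$ behavior when $\OPT=1$. The construction must therefore let OPT grow, with many phases each having a fresh hidden set, while arranging that identifying a phase's set requires remembering about $m$ bits per phase across interleaved phases. The plan is to interleave $\Theta(m^{c})$ phases, for $c$ large relative to the memory exponent, in a round-robin fashion. Each phase requires tracking an $\Omega(m)$-bit intersection state, so the algorithm can afford to track only $s/m$ of them. A direct-sum or pigeonhole argument then shows that most phases are essentially forgotten between their requests. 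For a forgotten phase, each request is a fresh random half-set. Against it the algorithm either buys $\Omega(m/\log m)$ sets eventually, or it keeps being forced, because by the coupon-collector argument it takes $m/\log m$ purchases to cover random halves with high probability. This gives ratio $\Omega(m/\log m)$ with $\OPT$ equal to the number of phases.
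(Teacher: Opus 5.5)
Your plan has a genuine gap, and its starting point is already refuted by the algorithm of \Cref{sec:prelim}. With a single hidden set $S^\ast$ ($\OPT=1$), that algorithm is $O(\log m)$-competitive. Between requests it retains only the $m$-bit running intersection of the revealed incidence patterns and the index of its current set. So no arrangement of ``decoy blocks'' can make the retained information useless. The algorithm never needs the history, only the surviving candidate family, which here is a subset of $[m]$. Also, $S^\ast$ carries just $\log_2 m$ bits, so ``$\sigma$ reveals at most $s$ bits about it'' constrains nothing. More generally, hidden covers of size $c$ can be tracked in $\binom{m}{c}$ bits, so the cover size must grow with the memory exponent.

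The interleaving fix fails as well. Every algorithm pays at most $m$, so the ratio is at most $m/\OPT$. If $\OPT$ equals the number of phases, as you write, then $\Theta(m^c)$ phases give ratio $O(1)$, and a ratio of $\Omega(m/\log m)$ permits only $O(\log m)$ phases. Under round-robin scheduling the algorithm knows which phase each request belongs to. It can therefore run the $\OPT=1$ algorithm separately per phase, using $O(m\log m)$ bits in total and paying $O(\log^2 m)$. Finally, the steps meant to make memory bite are asserted, not argued: the direct-sum/pigeonhole step, the ``$O(\log m)$ loss per purchase from the union bound over states'', and the coupon-collector claim. These steps are the entire content of the theorem.

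Two ingredients are missing. First, the hidden optimum should be a $k$-set with $k=\Theta(\log m)$, taken from a union-expanding family of size $N+q$ (the second regime of \Cref{thm:union-expansion-main}). Here $N=m^{d+3}$ exceeds the $m^d$ bits of memory by a polynomial factor, and any $q=\lfloor m/k\rfloor$ distinct members have union larger than $r=\lfloor m/32\rfloor$. This keeps $\OPT\le k$ while the surviving subfamily can no longer be stored.

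Second, you need a rigorous bridge from memory to cost, which the paper takes from $q$-party one-way set-disjointness. Player $i$ maps $S_i$ to $\mathcal D_i=\{W_s:s\in S_i\}\cup\{V_i\}$ and feeds the forcing batch $F_r(\mathcal D_i)$ of \Cref{lem:forcing-batch} to the simulated algorithm. It then forwards the algorithm's memory state together with the $m$-bit purchase vector.
\begin{itemize}
\item If the $S_i$ share an element $s^\star$, then $W_{s^\star}$ covers every batch, so $\OPT\le k$. An algorithm with ratio $\alpha<r/(3k)$ then ends with at most $r$ sets with probability above $2/3$.
\item If the $S_i$ are pairwise disjoint and at most $r$ sets were bought, each batch would force a distinct witness from its own $\mathcal D_i$. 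Union-expansion would then give a union larger than $r$, a contradiction, so more than $r$ sets are bought.
\end{itemize}
This yields a protocol with communication at most $2m^{d+1}$. That contradicts the $\Omega(N/q)$ bound of Chakrabarti, Khot, and Sun, since $N/q\ge m^{d+2}$. Note the distinguishing structure: cheap behavior on small-$\OPT$ instances is impossible because it would let the last player tell them apart from instances that force a union larger than $r$. Your plan has no analogue of this.
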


	\subsection{Our Techniques}
	\label{sec:techniques}
	As already mentioned, online set cover instances with $\OPT=1$ admit an $O(\log m)$-competitive randomized algorithm~\cite{Korman04} (see Section~\ref{sec:prelim} for more details).
	Thus, any strengthening of the randomized $\Omega(\log m)$ lower bound requires an instance for which $\OPT=k\gg1$.
	
	We view the unknown optimum as a ``hidden'' $k$-subset of the $m$ available sets.
	A clever algorithmic approach might try to keep track of all candidate $k$-subset covers that have {\em survived} so far (i.e., can still potentially be the optimal $k$-subset), and define a distribution over these candidate covers.
	The crux of this approach is that updating the distribution after each new request filters the surviving family.
	If this could be done by paying $O(1)$ extra sets for every halving of the surviving family, then an algorithm would pay overall $\log\binom{m}{k}\approx k\log m$.
	Conversely, if an algorithm can be forced to pay $\Omega(k)$ extra sets per halving, then it incurs a cost of $k\log\binom{m}{k}\approx k^2\log m$.
	As $\OPT=k$, the difference between the latter two bounds roughly corresponds to $\log m$ vs. $k\log m$ competitive guarantees, which is substantial for $k=\operatorname{poly}(m)$.
	
	We show that for any randomized algorithm, there exists an online set cover instance that forces the algorithm to incur a high cost for covering it.
	A \emph{candidate cover} is a $k$-subset of the $m$ available set indices, and a \emph{candidate family} is a family of candidate covers.
	The elements of the online set cover instance are given by an adversary in batches.
	For each nonempty candidate family, we construct a designated {\em forcing batch} such that every bounded-size purchased family 
    covers the batch if and only if it fully contains at least one candidate cover from the candidate family.
	Such a contained candidate cover serves as a \emph{witness} for the current candidate family.
	
	Starting from a large candidate family $\mathcal C$, the adversary proceeds in rounds and in every round retains a uniformly random half of the current candidate family, independently of the algorithm's choices.
	It then presents the forcing batch for the retained half.
	Thus, the current witness is discarded by the next random half with probability $1/2$.
	Consequently, after many rounds, the algorithm is forced to purchase many new witnesses.
	Since the algorithm's choices are irrevocable, all previously chosen witnesses remain in its final purchased family.
	
	The witnesses obtained over the rounds need not be disjoint.
	A main combinatorial challenge is thus to guarantee that their union still contains a large number of set indices.
	To this end, we use a large \emph{union-expanding} family, in which the union of any $q\approx\sqrt m$ distinct candidate covers is at least $r=\Theta(m)$.
	Therefore, if at least $q$ distinct witnesses are obtained, the algorithm buys more than $r$ sets.
	In contrast, the unique candidate cover that survives the entire filtration covers every batch and has size $k\approx\sqrt m$.
	Putting these ingredients together yields the $\Omega(\sqrt m)$ lower bound.
	
	We give a probabilistic construction of the required union-expanding family.
	The same set-system property is related to a hypergraph formulation studied since Brown, Erd\H{o}s, and S\'os~\cite{BrownErdosSos73}; see also~\cite{alon2006extremal,shangguan2023degenerate,delcourt2024limit}.

\paragraph{Random-order.} For the random-order result, the main idea is to force the random-order to be close to the worst-case construction above. Using sufficiently many element copies of each original element (of the worst-case construction), one can force with high probability essentially any desired order on the original elements. To obtain a lower bound without element copies, we instead use a subset of the sets $[d]\subseteq[m]$ to simulate the worst-case hardness construction, while the remaining sets are used to distinguish between elements that would otherwise have the same incidence pattern.

Specifically, we consider each filtration step of the adversarial construction, in which we halve the current candidate family, as a {\em level}. For each level and every subset $A\subseteq[d]$, we create many corresponding requests. This is necessary because the candidate-family filtration is random, so before the filtration is sampled we do not know at which level a given incidence pattern $A$ will be needed. The number of requests per level decreases geometrically, so earlier levels contain many more requests than later ones. Thus, with high probability, sufficiently many requests from each level are revealed before any request from a later level, reproducing the sequence of forcing batches from the adversarial-order construction. 
This allows us to apply essentially the same filtration argument, yielding expected cost $\Omega(d)$, while $\OPT=O(\sqrt d)$. Distinguishing between all requests without using element copies requires $\Theta(d^{3/2})$ additional sets, so we choose $d=\Theta(m^{2/3})$, giving the claimed $\Omega(m^{1/3})$ lower bound.

    \paragraph{Bounded space.} The proof of \Cref{thm:space-lb} uses a reduction from $q$-party one-way set-disjointness. In this problem, each player $i$, $1\le i\le q$, receives a set $S_i\subseteq[N]$, under the promise that either the sets $S_1,\ldots,S_q$ are pairwise disjoint, or there is a unique element contained in all $q$ sets. Every randomized one-way protocol that distinguishes between these two cases with constant probability, requires $\Omega(N/q)$ communication between consecutive players by the work of Chakrabarti, Khot, and Sun~\cite{ChakrabartiKhotSun03}.

In our reduction to online set cover, each element $s$ of the disjointness universe is represented by a candidate cover $W_s$ from a union-expansion family (with a similar proof, yet different parameters from the information-theoretic lower bound). Thus, each player replaces its input set $S_i$ by the corresponding candidate family $\{W_s:s\in S_i\}$ and presents its forcing batch to the online algorithm. Between consecutive players, we can only pass the memory state of the online algorithm, so a $\operatorname{poly}(m)$-space algorithm induces a low-communication protocol. If the sets $S_1,\ldots,S_q$ are pairwise disjoint, the forcing batches force the algorithm to purchase distinct candidate covers (though not necessarily disjoint), and union-expansion forces a large union, i.e., large cost; conversely, if all sets share an element $s^\star$, then $W_{s^\star}$ covers every forcing batch, so $\OPT$ is small. Thus, a better competitive ratio would yield a low-communication protocol, contradicting the lower bound of~\cite{ChakrabartiKhotSun03}.

	\paragraph*{Organization.}
	\Cref{sec:prelim} provides background on the online set cover problem.
	In \Cref{sec:components}, we define the set cover instance used in the lower bound proofs, as well as the main combinatorial ingredients of the proofs: forcing batches and union-expanding candidate families.
	In \Cref{sec:main_hardness}, we prove our main worst-case lower bound, the random-order lower bounds, and the polynomial-space lower bound.
	\Cref{sec:discussion} concludes with a discussion of open questions and future research directions.
	
	\section{Preliminaries}
	\label{sec:prelim}
	We start with a more concrete definition of the problem and model.

	\begin{definition}[Online set cover (unweighted)]
		Let $\mathcal U$ be a finite universe and let $\mathcal S\subseteq2^{\mathcal U}$ be a family of subsets.
		We write $n:=|\mathcal U|$ and $m:=|\mathcal S|$.
		An instance is given by $(\mathcal U,\mathcal S)$ together with an online sequence of elements $e_1,e_2,\ldots,e_T\in\mathcal U$ revealed one-by-one.
		We refer to a contiguous subsequence of the elements as a \emph{batch}.

		An online algorithm maintains a sequence of purchased subfamilies
		$X_0\subseteq X_1\subseteq\cdots\subseteq X_T\subseteq\mathcal S$
		such that, for every $t\in[T]$, all elements revealed up to time $t$ are covered, i.e.,
		\(
		\{e_1,\ldots,e_t\}\subseteq\bigcup_{S\in X_t}S.
		\)
		The cost of the algorithm is $| X_T|$.
		The offline optimum is
		\[
		\OPT:=\min\left\{| X|: X\subseteq\mathcal S\text{ and }\{e_1,\ldots,e_T\}\subseteq\bigcup_{S\in X}S\right\}.
		\]
	\end{definition}
	
	A deterministic algorithm is $\alpha$-{\em competitive} if $| X_T|\le\alpha \cdot \OPT$ for every input sequence.
	A randomized algorithm is $\alpha$-competitive against an {\em oblivious} adversary if $\mathbb E[| X_T|]\le\alpha \cdot \OPT$ for every fixed (in advance) input sequence, where expectation is taken over the algorithm's random choices.
	
	\paragraph*{Models and adversaries.}
	Two information models for online set cover are considered. 
	\begin{itemize}
		\item In the {\em known-instance} model, the set system $(\mathcal U,\mathcal S)$ is known to the algorithm in advance. The order in which the elements appear is unknown, and some elements may not be requested. 
		\item 	In the {\em unknown-instance} model, the algorithm initially knows only the $m$ set labels, and the sets covering each element are revealed upon its arrival. 
	\end{itemize}
	Since our lower bound holds in the known-instance model, it also applies to the unknown-instance model. We also note that against an {\em adaptive} adversary (rather than oblivious), an $\Omega(m)$ lower bound is immediate even when $\OPT=1$.

	\paragraph*{Random-order online set cover.}
	The set system $(\mathcal U,\mathcal S)$ is given to the algorithm in advance. The adversary fixes a set $E\subseteq\mathcal U$ of $N$ distinct requested elements, and the value of $N$ may also be given to the algorithm. The elements of $E$ are then revealed in a uniformly chosen random order. Distinct elements may belong to the exact same collection of sets, and then they are called {\em copies}. A randomized algorithm is $\alpha$-competitive in this model if, for every fixed $(\mathcal U,\mathcal S,E)$,
	\(
	\mathbb E_{\pi,\mathcal A}[| X_N|]\le\alpha \cdot \OPT,
	\)
	where $\pi$ is a uniformly chosen random permutation of $E$ and the remaining randomness is internal to the algorithm.

	\paragraph*{The case $\OPT=1$.}
	For completeness, we give the $O(\log m)$-competitive algorithm for $\OPT = 1$~\cite{Korman04}. 
	Consider an instance whose revealed sequence can be covered by one available set.
	The algorithm maintains the family of sets that cover every element requested so far and, whenever its currently selected set ceases to belong to this family, samples a new set uniformly from the surviving family.
	Against an oblivious adversary, the selected set is uniform in the current surviving family.
	If the successive distinct sizes of the surviving family are
	$m=a_0>a_1>\cdots>a_\ell\ge1$, then, including the initial purchase, the expected number of purchases is at most
	\[
	1+\sum_{j=1}^{\ell}\frac{a_{j-1}-a_j}{a_{j-1}}
	\le 1+\sum_{h=2}^{m}\frac1h
	=O(\log m).
	\]
	The inequality holds because, for every $j$, each term $1/h$ with $a_j<h\le a_{j-1}$ is at least $1/a_{j-1}$, and these intervals are disjoint.

	\paragraph*{Notation.} Throughout, $[m]:=\{1,2,\ldots,m\}$ and, for $0 \le k \le m$,
	$\binom{[m]}{k}:=\{A\subseteq[m]:|A|=k\}$.
	All logarithms are natural unless stated otherwise.

	\section{The Combinatorial Constructions}
	\label{sec:components}
	In this section, we define the set system used in our lower bound and give its two main combinatorial ingredients: forcing batches for arbitrary nonempty candidate families and union-expanding candidate families.
	
	\paragraph*{The set system.}
	Fix $m\ge1$. For every nonempty subset $A\subseteq[m]$, introduce a distinct universe element $u_A$, whose {\em incidence pattern} is $A$, i.e.,
	$A$ is the subcollection of sets covering $u_A$.
	The universe is
	\(
	\mathcal U_m:=\{u_A \mid A\subseteq[m], A \neq \emptyset\},
	\)
	and, for each $i\in[m]$, the $i$-th available set is
	\(
	S_i:=\{u_A \mid i\in A\}.
	\)
	Thus, $|\mathcal U_m|=2^{m}-1$ and there are exactly $m$ available sets.
	We identify a purchased subfamily with its index set $X\subseteq[m]$.
	By construction,
	\begin{equation}
		\label{eq:incidence}
		X\text{ covers }u_A\quad\Longleftrightarrow\quad X\cap A\ne\emptyset.
	\end{equation}
	A \emph{candidate cover} is a subset $W\subseteq[m]$, and a \emph{candidate family} is a collection of candidate covers; in the lower-bound construction, all candidate covers have size $k$. 

	\paragraph{Forcing batches.} For every $\mathcal D\subseteq2^{[m]}$ and every integer $r \ge 1$, we say that a sequence of elements $F_r(\mathcal D)$ from $\mathcal{U}_m$ is an {\em $r$-forcing batch} of $\mathcal{D}$ if for every
	$|X|\le r$,
	\[
	X\text{ covers every element of }F_r(\mathcal D)
	\quad\Longleftrightarrow\quad
	\text{there exists }W\in\mathcal D\text{ with }W\subseteq X.
	\]
	The next lemma provides the forcing batch used in the construction. 
	
	\begin{restatable}{lemma}{forcing}
		\label{lem:forcing-batch}
		For every nonempty candidate family $\mathcal D\subseteq2^{[m]}$ consisting of nonempty candidate covers and every integer $r\ge1$, there is an $r$-forcing batch of  $\mathcal D$ such that $|F_r(\mathcal D)|\le\sum_{j=0}^{r}\binom mj$. 
	\end{restatable}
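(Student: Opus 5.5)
We construct the batch explicitly from the "blocking" patterns of $\mathcal D$. For a set $Y\subseteq[m]$ with $|Y|\le r$ that contains no member of $\mathcal D$, we want an element that $Y$ fails to cover, while every set containing a member of $\mathcal D$ does cover it. The natural choice is $u_{[m]\setminus Y}$. Accordingly, we define
\[
F_r(\mathcal D):=\bigl\{u_{[m]\setminus Y} \;:\; Y\subseteq[m],\ |Y|\le r,\ \text{no }W\in\mathcal D\text{ satisfies }W\subseteq Y\bigr\},
\]
listed in any order. Every member of $\mathcal D$ is nonempty, so $Y=[m]$ is never included and each $[m]\setminus Y$ is a nonempty pattern, making $u_{[m]\setminus Y}$ a legal element of $\mathcal U_m$. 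The size bound $|F_r(\mathcal D)|\le\sum_{j=0}^{r}\binom mj$ then follows because the elements are indexed by distinct sets $Y$ of size at most $r$.

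For the forward direction, let $|X|\le r$ and suppose $X$ covers every element of the batch. If $X$ contained no $W\in\mathcal D$, then $Y:=X$ would be an admissible index, so $u_{[m]\setminus X}$ would lie in the batch. By \eqref{eq:incidence}, $X$ covers this element if and only if $X\cap([m]\setminus X)\ne\emptyset$, which fails. Hence some $W\in\mathcal D$ satisfies $W\subseteq X$.

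For the reverse direction, suppose $W\subseteq X$ for some $W\in\mathcal D$, and take any batch element $u_{[m]\setminus Y}$. Since $Y$ contains no member of $\mathcal D$, we have $W\not\subseteq Y$, so there is some $i\in W\setminus Y$. This $i$ lies in $X\cap([m]\setminus Y)$, and by \eqref{eq:incidence}, $X$ covers the element. This direction does not even use $|X|\le r$.

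The only delicate points are that the patterns are nonempty, which is exactly why both $\mathcal D$ and its members are required to be nonempty, and that the size count uses $j$ from $0$ to $r$, including $Y=\emptyset$. I do not expect a real obstacle. The main step is the choice of complements of non-witnessing small sets as the batch, after which both directions are one line each via \eqref{eq:incidence}.
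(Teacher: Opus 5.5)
Your proof is correct and is the paper's argument: writing $A=[m]\setminus Y$, your admissibility condition ``no $W\in\mathcal D$ with $W\subseteq Y$'' is exactly the paper's condition that $A\cap W\neq\emptyset$ for all $W\in\mathcal D$, so the two batches coincide and both directions are argued in the same way via \eqref{eq:incidence}. One small nit: $Y=[m]$ is excluded because $\mathcal D$ is nonempty (any $W\in\mathcal D$ satisfies $W\subseteq[m]$), not because the members of $\mathcal D$ are nonempty.
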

	
	\begin{proof}
		Define the forcing batch by
		\[
		F_r(\mathcal D):=\Big(u_A\ \Big|\ A\subseteq[m],\ \big|[m]\setminus A\big|\le r,\text{ and }A\cap W\ne\emptyset ~\forall W\in\mathcal D \Big),
		\]
		with the elements listed according to a fixed arbitrary rule on $2^{[m]}$. Let $X\subseteq[m]$ satisfy $|X|\le r$.
		If $W\subseteq X$ for some $W\in\mathcal D$, then since $u_A\in F_r(\mathcal D)$ satisfies $A \cap W \neq \emptyset$ it follows $A\cap X\ne\emptyset$, so $X$ covers $F_r(\mathcal D)$ by~\eqref{eq:incidence}.
		Conversely, if no $W\in\mathcal D$ is contained in $X$, then $A:=[m]\setminus X$ intersects every $W\in\mathcal D$ and satisfies $|[m]\setminus A|=|X|\le r$. Thus, $u_A\in F_r(\mathcal D)$, whereas $X\cap A=\emptyset$.
		The size bound follows since every element in $F_r(\mathcal D)$ is indexed by the complement of a set of size at most $r$.
	\end{proof}
	\paragraph{Union-expanding candidate families}
	
	We will use a large family with the property that the union of any $q$ distinct candidate covers is large.
	This is the combinatorial property needed in the lower-bound proof.
	
	\begin{definition}
		\label{def:union-expansion}
		Let $m,k,q,r \in \mathbb{N}$ with $1\le k\le r\le m$ and $q\ge2$.
		A candidate family $\mathcal C\subseteq\binom{[m]}{k}$ is called $(m,k,q,r)$-\emph{union-expanding} if every $q$ distinct candidate covers $W_1,\ldots,W_q\in\mathcal C$ satisfy
		$
		\left|\bigcup_{i=1}^{q}W_i\right|>r.
		$
	\end{definition}
	
	We first use the first moment method to obtain a general criterion for union-expansion. 
	
	\begin{lemma}
		\label{lem:union-expansion-lower}
		Let $m,k,q,r$ be integers with $1\le k\le r\le m$ and $q\ge2$, and let $M$ be an integer with $0\le M\le\binom{m}{k}$. If
		\[
		\binom{M}{q}\binom{m}{r}
		\left(\frac{\binom{r}{k}}{\binom{m}{k}}\right)^q<1,
		\]
		then there is an $(m,k,q,r)$-union-expanding candidate family of size $M$.
	\end{lemma}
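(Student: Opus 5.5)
We use the first moment method over a uniformly random family. Choose $\mathcal C$ uniformly at random among all $M$-element subfamilies of $\binom{[m]}{k}$; this is possible since $M\le\binom mk$. If $M<q$, no $q$ distinct members exist, so $\mathcal C$ is vacuously union-expanding (consistent with $\binom Mq=0$). Assume $M\ge q$ from now on.

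\emph{Reduction to a fixed $r$-set.} Call a $q$-subfamily $\{W_1,\dots,W_q\}\subseteq\mathcal C$ \emph{bad} if $|\bigcup_i W_i|\le r$. Since $k\le r\le m$, any bad $q$-subfamily has its union contained in some $R\in\binom{[m]}{r}$ (pad the union up to size exactly $r$). Hence the number of bad $q$-subfamilies is at most
\[
Z:=\sum_{R\in\binom{[m]}{r}}\#\Big\{\{W_1,\dots,W_q\}\subseteq\mathcal C \text{ distinct}: W_i\subseteq R\ \forall i\Big\}.
\]
It suffices to show $\mathbb E[Z]<1$, because then some outcome has $Z=0$, which is exactly $(m,k,q,r)$-union-expansion.

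\emph{Bounding the expectation.} By linearity, $\mathbb E[Z]=\sum_R\sum_{\mathcal Q}\Pr[\mathcal Q\subseteq\mathcal C]$, where $\mathcal Q$ ranges over $q$-subsets of $\binom{R}{k}$. Write $N:=\binom mk$ and $a:=\binom rk$. For a uniform $M$-subset of an $N$-set, a fixed $q$-set is contained with probability $\binom{N-q}{M-q}/\binom NM=\binom Mq/\binom Nq$. Therefore
\[
\mathbb E[Z]=\binom mr\binom aq\frac{\binom Mq}{\binom Nq}.
\]
Next, $\binom aq/\binom Nq=\prod_{i=0}^{q-1}\frac{a-i}{N-i}\le (a/N)^q$, because $a\le N$ gives $\frac{a-i}{N-i}\le\frac aN$ for each $i$. (If $a<q$, the term is $0$, which is fine.) This yields
\[
\mathbb E[Z]\le\binom Mq\binom mr\Big(\tbinom rk/\tbinom mk\Big)^q<1
\]
by hypothesis, which completes the argument.

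\emph{Main obstacle.} The only step needing care is the exact probability computation. If we instead sampled $M$ covers independently with replacement, we would get duplicates and could lose the size-$M$ guarantee. Sampling a uniform $M$-subset avoids this and gives the clean ratio $\binom Mq/\binom Nq$. Comparing that ratio to $(a/N)^q$ is a simple termwise inequality. The overcounting of each bad $q$-subfamily by several choices of $R$ only helps, since we need just an upper bound on $\mathbb E[Z]$.
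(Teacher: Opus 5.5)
Your proof is correct and takes the same approach as the paper: a uniformly random $M$-subfamily, a count $Z$ of pairs $(R,\mathcal Q)$ with $R\in\binom{[m]}{r}$, the exact expectation $\binom mr\binom{\binom rk}{q}\binom Mq/\binom{\binom mk}{q}$, and the termwise bound by $\bigl(\binom rk/\binom mk\bigr)^q$. The only cosmetic difference is that you fold the case $\binom rk<q$ into the computation (the product then contains a zero factor), whereas the paper handles that case separately.
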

	
	\begin{proof}
		If $M<q$, the assertion is vacuous. If $\binom{r}{k}<q$, no $r$-set contains $q$ distinct $k$-sets, so any $M$-element family works. Otherwise, choose uniformly an $M$-element family $\mathcal C\subseteq\binom{[m]}k$, and let $Z$ count the pairs $(R,Q)$ such that $R\in\binom{[m]}r$ and $Q\subseteq\mathcal C$ consists of $q$ members contained in $R$. By linearity of expectation,
		\[
		\begin{aligned}
			\mathbb E[Z]
			&=\binom mr
			\binom{\binom rk}{q}
			\frac{\binom Mq}{\binom{\binom mk}{q}}\\
			&=\binom Mq\binom mr
			\prod_{j=0}^{q-1}
			\frac{\binom rk-j}{\binom mk-j}\\
			&\le\binom Mq\binom mr
			\left(\frac{\binom rk}{\binom mk}\right)^q<1.
		\end{aligned}
		\]
		The inequality holds since $\frac{\binom rk-j}{\binom mk-j} \leq \frac{\binom rk}{\binom mk}$ for every $0 \le j \le q-1$.
		Thus, some choice of $\mathcal C$ satisfies $Z=0$. If $q$ of its members had a union of size at most $r$, their union could be extended to an $r$-set, contradicting $Z=0$.
	\end{proof}
	
	We now make the criterion specific to the parameters used in our proofs. 
	
	\begin{restatable}{lemma}{unionA}
		\label{thm:union-expansion-main}

		For all sufficiently large $m \in \mathbb{N}$ 
		there is an $(m,k,q,r)$-union-expanding candidate family of size $M$ for both of the following parameter regimes.
		\begin{enumerate}
			\item $k:=\lfloor\sqrt m\rfloor$, $q:=\lceil\sqrt m\rceil$, $r:=\lfloor m/32\rfloor$, and $M:=2^{4q}$. 
			\item For every fixed $d \in \mathbb{N}$, $
			k:=\left\lceil2(d+3)\log m\right\rceil$
			$q:=\left\lfloor\frac{m}{k}\right\rfloor$, $r:=\left\lfloor\frac m{32}\right\rfloor$, and $M:=m^{d+3}+q$. 
		\end{enumerate}
		
	\end{restatable}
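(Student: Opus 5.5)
The plan is to apply \Cref{lem:union-expansion-lower} in both regimes and verify its criterion by bounding each factor in logarithmic scale. I will use three standard estimates: $\binom{M}{q}\le M^q$, $\binom{m}{r}\le (em/r)^r\le e^{H(1/32)m}$ (binary entropy in nats, roughly $(1/32)\log(32e)\,m\approx 0.14m$), and
\[
\binom{r}{k}\Big/\binom{m}{k}=\prod_{j=0}^{k-1}\frac{r-j}{m-j}\le (r/m)^k\le 32^{-k}.
\]
The criterion then reduces to
\[
q\log M + H(1/32)\,m - qk\log 32<0.
\]
One also has to check $M\le\binom mk$, which holds comfortably for large $m$ in both regimes. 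The bound $q\ge 2$ and $k\le r$ are immediate for large $m$.

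\textbf{Regime 1.} Here $\log M=4q\log 2$. The first term is $4q^2\log 2\approx 4\log 2\cdot m\approx 2.77m$ (using $q=\lceil\sqrt m\rceil\le\sqrt m+1$). The negative term is $qk\log 32\ge \sqrt m(\sqrt m-1)\cdot 3.46\approx 3.46m$. The remaining term is about $0.14m$. So the sum is about $(2.77+0.14-3.47)m+O(\sqrt m)<0$ for large $m$.

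\textbf{Regime 2.} Here $\log M\le (d+3)\log m+1$ (since $q\le m\le m^{d+3}$, $M\le 2m^{d+3}$). Then $q\log M\le (m/k)\big((d+3)\log m+1\big)\le m/2+o(m)$, because $k\ge 2(d+3)\log m$. Also $qk\ge m-k$, so $qk\log 32\ge 3.46(m-k)$. The remaining term is $0.14m$. The total is at most $(0.5+0.14-3.46)m+o(m)<0$.

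The main obstacle is simply bookkeeping constants. This includes the floors and ceilings, such as $qk$ versus $m$ in regime 1 where $\lfloor\sqrt m\rfloor\lceil\sqrt m\rceil\ge m-\sqrt m$, and the fact that the entropy bound on $\binom mr$ is small enough; the choice $r=m/32$ gives ample slack. I will first try the crude bound $(r/m)^k$. If the margin in regime 1 feels tight, the inequality $\log 32>4\log 2+H(1/32)$ is the exact check, and it holds, since $3.47>2.77+0.14$.
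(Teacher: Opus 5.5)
Your proposal is correct and is essentially the paper's proof. Like the paper, you apply \Cref{lem:union-expansion-lower} and bound $\log\binom Mq$, $\log\binom mr$ (via binary entropy) and $\binom rk/\binom mk\le (r/m)^k$. This reduces the two regimes to $4\log 2+H(1/32)<\log 32$ and $\tfrac12+H(1/32)<\log 32$, together with the check $M\le\binom mk$. Using $\binom Mq\le M^q$ instead of the paper's $(eM/q)^q$ changes nothing.

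There is one harmless slip. In your chain $\binom mr\le(em/r)^r\le e^{H(1/32)m}$, the second inequality is reversed: in fact $e^{mH(r/m)}\le(em/r)^r$, and $\tfrac1{32}\log(32e)>H(1/32)$. You can fix it either way:
\begin{itemize}
\item cite $\binom mr\le e^{mH(r/m)}\le e^{mH(1/32)}$ directly, using $r/m\le 1/32$ and the monotonicity of $H$ on $[0,\tfrac12]$; or
\item carry the exponent $\tfrac1{32}\log(32e)\approx 0.1396$ instead of $H(1/32)$.
\end{itemize}
The slack of at least about $0.55m$ absorbs either choice.
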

	
	\begin{proof}

		Let $H(x):=-x\log x-(1-x)\log(1-x)$ be the binary entropy function. 
		For both parameter regimes it holds that
		\begin{equation}
			\label{eq:union-expansion-bound}
			\begin{aligned}
				&\log\left[\binom Mq\binom mr
				\left(\frac{\binom rk}{\binom mk}\right)^q~\right]\\
				&\qquad=\log \binom Mq + \log \binom mr + q \log \frac{\binom rk}{\binom mk}\\
				&\qquad\le q\log\frac{eM}{q}+mH(r/m)+qk\log(r/m).\\
			\end{aligned}
		\end{equation}
		The inequality uses the following known facts. First, $\binom M q\le(e M/q)^q$ for all $1 \le q \le M$; second, $\binom mr\le e^{mH(r/m)}$ for all $0 \le r \le m$; third, since $\frac{r-j}{m-j} \leq \frac{r}{m}$ for every $0 \le j \le m-1$, we have
		\[
		\frac{\binom rk}{\binom mk}
		=\prod_{j=0}^{k-1}\frac{r-j}{m-j}
		\le\left(\frac rm\right)^k.
		\]
		
		To conclude, we consider the two parameter regimes separately.
		\begin{enumerate}
			\item $k:=\lfloor\sqrt m\rfloor$, $q:=\lceil\sqrt m\rceil$, $r:=\lfloor m/32\rfloor$, and $M:=2^{4q}$. Then, $$\log\left[\binom Mq\binom mr
			\left(\frac{\binom rk}{\binom mk}\right)^q~\right] \le \left(4\log2+H(1/32)-\log32+o(1)\right)m <0.$$
			The first inequality follows from \eqref{eq:union-expansion-bound} and since $q^2=m+O(\sqrt m)$, $qk=m+O(\sqrt m)$, and $q\log q=o(m)$. 
			
			\item For every fixed $d \in \mathbb{N}$, $
			k:=\left\lceil2(d+3)\log m\right\rceil$,
			$q:=\left\lfloor\frac{m}{k}\right\rfloor$, $r:=\left\lfloor\frac m{32}\right\rfloor$, and $M:=m^{d+3}+q$. Then, 
			
			$$\log\left[\binom Mq\binom mr
			\left(\frac{\binom rk}{\binom mk}\right)^q~\right]\le\left(\frac12+H(1/32)-\log32+o(1)\right)m <0.$$
			The first inequality follows from \eqref{eq:union-expansion-bound} and since $M=m^{d+3}(1+o(1))$, $q\le m/k$, $qk=m-O(k)$, and $r/m=1/32+o(1)$.
		\end{enumerate}

		It follows that $\binom{M}{q}\binom{m}{r}
		\left(\frac{\binom{r}{k}}{\binom{m}{k}}\right)^q<1$ in both parameter regimes. 
		Additionally, for both parameter regimes, for sufficiently large $m$, it holds that
		$
		\log\binom mk\ge k\log(m/k)>\log M,
		$
		so $M\le\binom mk$. 
		Thus, the proof follows from \Cref{lem:union-expansion-lower}.
	\end{proof}
	
	\section{The Hardness of Randomized Online Set Cover}
	\label{sec:main_hardness}
	
	In this section, we first prove our main lower bound, \Cref{thm:randomized-lb-osc}. Then, in \Cref{sec:random-order-proof} we prove the random-order lower bound, and in \Cref{sec:space-proof} we prove the polynomial-space lower bound.
	Our lower bounds utilize Yao's minimax principle; thus, we start by constructing a distribution over request sequences. The distribution is obtained from the set system introduced in the previous section and relies on the following random filtration system.
	
	Fix integers $1\le k\le r\le m$ and $q\ge2$, and set $T:=4q$. Let $\mathcal C^{(0)}\subseteq\binom{[m]}k$ be an $(m,k,q,r)$-union-expanding candidate family of size $2^T$.
	For every $t<T$, choose $\mathcal C^{(t+1)}$ uniformly at random from among all possible halves of $\mathcal C^{(t)}$. 
	Let $X_0\subseteq\cdots\subseteq X_T\subseteq[m]$ be random sets such that, for every $t<T$, $X_t$ is determined after $\mathcal C^{(t)}$ is chosen and before $\mathcal C^{(t+1)}$ is chosen. The sets $X_t$ can be thought of as the sets purchased up to time $t$ by a fixed deterministic algorithm, so as to satisfy a given request sequence. 
    
    The goal of the next lemma is to provide a lower bound on $\mathbb E[|X_T|]$. 
    The lemma is formulated in a way that allows it 
    to be later used in \Cref{sec:random-order-proof} for the random-order lower bound. The
    proof uses the simple observation that, for any fixed $W\in\mathcal C^{(t)}$, $W\notin\mathcal C^{(t+1)}$ with probability $1/2$, since $\mathcal C^{(t+1)}$ is a uniformly chosen random half of $\mathcal C^{(t)}$.

	\begin{lemma}
		\label{lem:filtration}
		If for every $t < T$ such that $|X_t|\le r$ there is $W_t\in\mathcal C^{(t)}$ satisfying $W_t\subseteq X_t$, then $\mathbb E[|X_T|]\ge(1-e^{-q/4})r$.
	\end{lemma}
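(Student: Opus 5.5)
We will show that the event $\{|X_T|\le r\}$ has probability at most $e^{-q/4}$. Since $|X_T|>r$ on the complement, this gives $\mathbb E[|X_T|]\ge r\cdot\Pr[|X_T|>r]\ge(1-e^{-q/4})r$.

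On the event $|X_T|\le r$, monotonicity gives $|X_t|\le r$ for every $t<T$, so every witness $W_t\in\mathcal C^{(t)}$ with $W_t\subseteq X_t\subseteq X_T$ exists. All witnesses lie in $\mathcal C^{(0)}$ and inside $X_T$, so their union has size at most $r$. By $(m,k,q,r)$-union-expansion of $\mathcal C^{(0)}$, the sequence $W_0,\dots,W_{T-1}$ therefore contains at most $q-1$ distinct covers.

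Next we count how often the witness must change. Call round $t$ a \emph{loss} if $W_t\notin\mathcal C^{(t+1)}$, with the convention that a round with $|X_t|>r$ also counts as a loss. Since $\mathcal C^{(t+1)}$ is a uniformly random half of $\mathcal C^{(t)}$, chosen after $X_t$ (and hence $W_t$, which we fix as a deterministic function of $X_t$ and $\mathcal C^{(t)}$) is determined, each round with $|X_t|\le r$ is a loss with probability exactly $1/2$, conditionally on the whole history. After a loss at round $t$, $W_t\notin\mathcal C^{(t+1)}\supseteq\mathcal C^{(s)}$ for all $s>t$, so $W_t$ never reappears as a witness. Hence, on the event $|X_T|\le r$, the number of losses among rounds $0,\dots,T-2$ is at most the number of distinct witnesses minus one, so it is at most $q-2<q$.

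The loss indicators form a sequence in which each indicator equals $1$ with conditional probability at least $1/2$ given the past. By a standard coupling, the number of losses among $T-1$ rounds stochastically dominates $\mathrm{Bin}(T-1,1/2)$. The mean of this binomial is $(4q-1)/2\approx 2q$, and we need $\Pr[\mathrm{Bin}<q]\le e^{-q/4}$. Chernoff with deviation $\delta\approx1/2$ gives roughly $e^{-\delta^2\mu/2}=e^{-2q/8}=e^{-q/4}$. The main obstacle is making the constants work exactly, because of the $T-1$ versus $T$ issue and the strict versus nonstrict inequality. We can use the last round as well, since losing $W_{T-1}$ at round $T-1$ is irrelevant: if the loss count over all $T$ rounds is at least $q$, then at least $q-1$ losses occur before round $T-1$, and together with $W_{T-1}$ this yields at least $q$ distinct witnesses. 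So the requirement becomes losses over $T$ rounds being at most $q-1$. Alternatively, Hoeffding applied to $\mathrm{Bin}(4q,1/2)$ with deviation $q$ gives $\exp(-2q^2/(4q))=e^{-q/2}\le e^{-q/4}$. This comfortably absorbs the off-by-one slack.
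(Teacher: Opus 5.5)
Your proof is correct and takes essentially the same approach as the paper: witnesses $W_t$, loss indicators $\mathbf 1[W_t\notin\mathcal C^{(t+1)}]$, the fact that a lost witness never reappears, so union-expansion forces fewer than $q$ losses on $\{|X_T|\le r\}$, and a Chernoff bound against $\mathrm{Bin}(4q,1/2)$. The differences are cosmetic: the paper picks $W_t\in\mathcal C^{(t)}$ arbitrarily when $|X_t|>r$, which makes the indicators exactly i.i.d.\ Bernoulli$(1/2)$ and removes the need for stochastic domination, and it notes directly that witnesses at distinct loss rounds are pairwise distinct, which avoids your $T-1$ versus $T$ detour.
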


\begin{proof}

For each \(t<T\), choose \(W_t\in\mathcal C^{(t)}\), based only on the history up to time \(t\), such that \(W_t\subseteq X_t\) whenever \(|X_t|\le r\), which is possible by the lemma statement, choose $W_t$ arbitrarily otherwise.
    Define
	$
	I_t:=\mathbf 1[W_t\notin\mathcal C^{(t+1)}].
	$
Since $W_t$ belongs to $\mathcal C^{(t+1)}$ with probability $\frac{1}{2}$, it holds that 
$
\Pr[I_t=1]=\frac12.
$
	Moreover, for every $t<T$ and every fixed $a_0,\ldots,a_{t-1}\in\{0,1\}$,
	 since $I_0,\ldots,I_{t-1}$ are determined before
	 $\mathcal C^{(t+1)}$ is chosen,
	\[
	\Pr[I_t=1\mid I_0=a_0,\ldots,I_{t-1}=a_{t-1}]=\frac12.
	\]
	Therefore, $I_0,\ldots,I_{T-1}$ are independent
	$\operatorname{Bernoulli}(1/2)$ random variables.
	Since $T=4q$, we have
	$
	\mu:=\mathbb E\!\left[\sum_{t<T} I_t\right]=\frac{T}{2}=2q.
	$
	Applying the multiplicative Chernoff bound~\cite{MitzenmacherUpfal17} with
	$\delta=1/2$ gives
	\begin{equation}
		\label{eq:cher}
			\Pr\!\left[\sum_{t<T} I_t<q\right]
		=
		\Pr\!\left[\sum_{t<T} I_t<(1-\delta)\mu\right]
		\le
		\exp\!\left(-\frac{\delta^2\mu}{2}\right)
		=
		e^{-q/4}.
	\end{equation}

	We prove that if $|X_T|\le r$ then $\sum_{t<T} I_t<q$. Assume that $|X_T|\le r$; then, 
	 $|X_t|\le r$ for every $t<T$ since $X_t \subseteq X_T$, implying
$
	W_t\subseteq X_t\subseteq X_T.
	$
	Moreover, if $I_t=1$ for some $t<T$, then for any $s>t$ by definition of $W_s$ we have
	$W_s\in\mathcal C^{(s)}\subseteq\mathcal C^{(t+1)}
$; thus, necessarily $W_t\ne W_s$. Since $X_T$ cannot contain $q$ distinct members of
	$\mathcal C^{(0)}$ by the $(m,k,q,r)$-union-expansion property, it follows that $\sum_{t<T} I_t<q
	$. Thus,
	$
	|X_T|\le r
	$ implies
 	$\sum_{t<T} I_t<q.
	$
	Consequently, combined with \eqref{eq:cher},
	\[
	\Pr[|X_T|\le r]
	\le
	\Pr\!\left[\sum_{t<T}I_t<q\right]
	\le e^{-q/4}.
	\]
	Therefore,
	\[
	\mathbb E[|X_T|]
	\ge
	r\cdot \Pr[|X_T|>r]
	\ge
	(1-e^{-q/4})r,
	\]
	as required.
\end{proof}

	\begin{figure}[H]
		\centering
		\begin{tikzpicture}[
			candidate/.style={
				circle,
				draw=cbBlue,
				very thick,
				fill=cbBlue!22,
				minimum size=7.5mm,
				inner sep=0pt
			},
			retained/.style={
				circle,
				draw=cbBlue,
				very thick,
				fill=cbBlue!22,
				minimum size=7.5mm,
				inner sep=0pt
			},
			discarded/.style={
				circle,
				draw=cbOrange!85!black,
				very thick,
				dashed,
				fill=cbOrange!30,
				minimum size=7.5mm,
				inner sep=0pt
			}
			]
			
			\node[candidate] (t1) {$W_1$};
			\node[candidate, right=5.5mm of t1] (t2) {$W_2$};
			\node[candidate, right=5.5mm of t2] (t3) {$W_3$};
			\node[candidate, right=5.5mm of t3] (t4) {$W_4$};
			\node[candidate, right=5.5mm of t4] (t5) {$W_5$};
			\node[candidate, right=5.5mm of t5] (t6) {$W_6$};
			\node[candidate, right=5.5mm of t6] (t7) {$W_7$};
			\node[candidate, right=5.5mm of t7] (t8) {$W_8$};
			
			\node[retained, below=12mm of t1] (b1) {$W_1$};
			\node[discarded, right=5.5mm of b1] (b2) {$W_2$};
			\node[retained, right=5.5mm of b2] (b3) {$W_3$};
			\node[discarded, right=5.5mm of b3] (b4) {$W_4$};
			\node[retained, right=5.5mm of b4] (b5) {$W_5$};
			\node[discarded, right=5.5mm of b5] (b6) {$W_6$};
			\node[discarded, right=5.5mm of b6] (b7) {$W_7$};
			\node[retained, right=5.5mm of b7] (b8) {$W_8$};
			
			\node[left=8mm of t1] {$\mathcal C^{(t)}$};
			\node[left=8mm of b1] {$\mathcal C^{(t+1)}$};
			
			\node at ($(t4)!0.5!(t5)+(0,-10mm)$) (arrow) {\Large$\Downarrow$};
			\node[right=2mm of arrow] {$F_r\!\left(\mathcal C^{(t+1)}\right)$};
		\end{tikzpicture}
		\caption{One step of the random filtration process. The family $\mathcal C^{(t+1)}$ is a uniformly random half of $\mathcal C^{(t)}$. Solid blue indicates candidate covers in $\mathcal C^{(t+1)}$ and dashed orange indicates discarded candidates from $\mathcal C^{(t)} \setminus \mathcal C^{(t+1)}$. The forcing batch $F_r(\mathcal C^{(t)})$ (not shown here) forces the algorithm to contain at least one of $W_1,\ldots,W_8$. Then, after  
			the next forcing batch $F_r(\mathcal C^{(t+1)})$ arrives, it ensures that the algorithm contains at least one of the covers $W_1,W_3,W_5,W_8$. Thus, with probability $1/2$ the algorithm must choose a new candidate cover. 
		}
		\label{fig:random-halving}
	\end{figure}

	We can now prove our main result. 
	
	\main*
	
	\begin{proof}
		Fix a sufficiently large $m$, and set
		$k:=\lfloor\sqrt m\rfloor$, $q:=\lceil\sqrt m\rceil$, $r:=\lfloor m/32\rfloor$, and $T:=4q$. 
		By \Cref{thm:union-expansion-main}, fix an $(m,k,q,r)$-union-expanding family $\mathcal C^{(0)}\subseteq\binom{[m]}{k}$ of size $2^T$, and choose every $\mathcal C^{(t+1)}$ uniformly at random among the halves of $\mathcal C^{(t)}$. Thus, $\mathcal C^{(T)}=\{W^\star\}$ for a single candidate cover. Reveal the forcing batches
		$
		F_r(\mathcal C^{(0)}),F_r(\mathcal C^{(1)}),\ldots,F_r(\mathcal C^{(T)}).
		$
		Once the filtration system $\left(\mathcal{C}^{(t)}\right)_{t \le T}$ is sampled, the request sequence is fixed, so the filtration process defines a distribution over oblivious inputs. A visualization of a single filtration step is given in Figure~\ref{fig:random-halving}. 
		
		By Yao's minimax principle, fix a deterministic algorithm. Let $X_t\subseteq[m]$ be all set indices purchased by the algorithm up to and including the batch $F_r(\mathcal C^{(t)})$. 
		 Then, $X_0\subseteq\cdots\subseteq X_T$, and conditioning on the history up to time $t$, it holds that $\mathcal C^{(t+1)}$ is a uniformly random half of $\mathcal C^{(t)}$. For every $t<T$ such that $|X_t|\le r$, \Cref{lem:forcing-batch} gives a candidate $W_t\in\mathcal C^{(t)}$ with $W_t\subseteq X_t$. Hence, \Cref{lem:filtration} gives $\mathbb E[|X_T|]\ge(1-e^{-q/4})r$.
		Since $W^\star\in\mathcal C^{(t)}$ for every $t$, the sets indexed by $W^\star$ cover every batch, and therefore $\OPT\le k$. Yao's minimax principle then gives competitive ratio
		$
		\frac{(1-e^{-q/4})r}{k}=\Omega(\sqrt m)
		$. 
	\end{proof}

	\subsection{Random-Order Lower Bound}
	\label{sec:random-order-proof}

We next prove our random-order result, using the level construction described in Section~\ref{sec:techniques}. Recall that each level corresponds to one filtration step of the adversarial-order construction. For every level $t$ and nonempty $A\subseteq[d]$, we create $L_t$ corresponding requests $(t,A,\ell)$, where $L_t$ decreases geometrically with $t$. Request $\ell$ at level $t$ with incidence $A\subseteq[d]$ is kept distinct from other requests at $t$ and $A$ by adding a unique additional incidence pattern from the additional sets $R_{t,A,\ell}\subseteq[m]\setminus[d]$. An illustration is given in \Cref{fig:random-order-levels}. 
Then, a uniformly random ordering of all requests is generated by independently assigning each request a priority $p_{t,A,\ell}\sim\textnormal{Unif}[0,1]$ and ordering the requests according to increasing priority values.
For each level $t$, we define a priority threshold $z_t \in [0,1]$, so that with high probability, by processing all requests with priority at most $z_t$, sufficiently many requests from level $t$ will have arrived to enforce the forcing batch of a corresponding family $\mathcal{C}^{(t)}$, as in the adversarial construction, while no request from any later level $t'>t$ will have arrived (which would ruin the filtration process). 

We first show that, with high probability, the thresholds reproduce the desired filtration structure. Conditioning on this event, the adversarial-order filtration lemma yields the cost lower bound, while the surviving candidate cover keeps $\OPT$ small.

\begin{figure}[t]
\centering
\begin{tikzpicture}[
    req/.style={circle,draw,thick,minimum size=5mm,inner sep=0pt},
    set/.style={circle,draw,thick,minimum size=4.5mm,inner sep=0pt},
    Ain/.style={circle,draw,fill=black,minimum size=4.5mm,inner sep=0pt},
    edge/.style={thin}
]


\node[set] (s1) at (0,0) {};
\node at (0,-0.48) {$1$};

\node at (1.0,0) {$\cdots$};

\node[Ain] (a1) at (2.0,0) {};
\node[Ain] (a2) at (2.9,0) {};

\node at (3.9,0) {$\cdots$};

\node[set] (sd) at (5.2,0) {};
\node at (5.2,-0.48) {$d$};

\draw[thick] (5.95,-0.65) -- (5.95,0.65);

\node[set] (r1) at (6.8,0) {};
\node[set] (r2) at (7.6,0) {};
\node[set] (r3) at (8.4,0) {};
\node[set] (r4) at (9.2,0) {};
\node[set] (r5) at (10.0,0) {};
\node[set] (r6) at (10.8,0) {};
\node[set] (r7) at (11.6,0) {};
\node[set] (r8) at (12.4,0) {};
\node[set] (r9) at (13.2,0) {};

\node at (6.8,-0.48) {$d+1$};
\node at (13.2,-0.48) {$m$};

\draw (0,-0.95) -- (5.2,-0.95);
\node at (2.6,-1.25) {$[d]$};

\draw (6.8,-0.95) -- (13.2,-0.95);
\node at (10.0,-1.25) {$[m]\setminus[d]$};

\node at (2.45,-0.55) {$A$};


\node[req] (u11) at (1.7,4.0) {};
\node[req] (u12) at (2.9,4.0) {};
\node[req] (u13) at (4.1,4.0) {};
\node[req] (u14) at (5.3,4.0) {};

\node at (3.5,4.7) {};

\node[above=1pt] at (u11.north) {\scriptsize $(t,A,1)$};
\node[above=1pt] at (u12.north) {\scriptsize $(t,A,2)$};
\node[above=1pt] at (u13.north) {\scriptsize $(t,A,3)$};
\node[above=1pt] at (u14.north) {\scriptsize $(t,A,4)$};

\foreach \u in {u11,u12,u13,u14}{
    \draw[edge] (\u) -- (a1);
    \draw[edge] (\u) -- (a2);
}

\draw[edge] (u11) -- (r4);
\draw[edge] (u11) -- (r7);

\draw[edge] (u12) -- (r5);
\draw[edge] (u12) -- (r8);

\draw[edge] (u13) -- (r6);
\draw[edge] (u13) -- (r8);
\draw[edge] (u13) -- (r9);

\draw[edge] (u14) -- (r3);
\draw[edge] (u14) -- (r7);
\draw[edge] (u14) -- (r9);


\node[req] (u21) at (8.0,4.0) {};
\node[req] (u22) at (9.4,4.0) {};

\node at (8.7,4.7) {};

\node[above=1pt] at (u21.north) {\scriptsize $(t+1,A,1)$};
\node[above=1pt] at (u22.north) {\scriptsize $~~~~~(t+1,A,2)$};

\foreach \u in {u21,u22}{
    \draw[edge] (\u) -- (a1);
    \draw[edge] (\u) -- (a2);
}

\draw[edge] (u21) -- (r1);
\draw[edge] (u21) -- (r2);
\draw[edge] (u21) -- (r6);

\draw[edge] (u22) -- (r2);
\draw[edge] (u22) -- (r5);
\draw[edge] (u22) -- (r9);

\end{tikzpicture}

\caption{Requests from two consecutive levels for a fixed subset $A\subseteq[d]$. The set $A$ is the common incidence pattern on the first $d$ sets: every request shown is covered by exactly the sets indexed by $A$ inside $[d]$. The auxiliary incidences $R_{t,A,\ell}\subseteq[m]\setminus[d]$ distinguish different requests with the same level and the same set $A$. The number of requests decreases from level $t$ to level $t+1$.}
\label{fig:random-order-levels}
\end{figure}

	\randomorder*
	
	\begin{proof}
		We first define several parameters. Define $d:=\left\lfloor\left(\frac{m}{11}\right)^{2/3}\right\rfloor$, where $m$ is sufficiently large; the last $m-d$ sets are {\em dummy sets} whose role is to avoid having element copies. Let $k:=\lfloor\sqrt d\rfloor$, $q:=\lceil\sqrt d\rceil$, $r:=\lfloor d/32\rfloor$, and $T:=4q$. Also define $s:=\lceil100\cdot2^rd\log d\rceil$ and $D:=1+\lceil100s2^dT\rceil$. Finally, for $t<T$, let $L_t:=D^{T-t}$ be the number of requests of each type on {\em level} $t$, and let $z_t:=s/L_t$ be its priority threshold; define $N:=2^d\sum_{t<T}L_t$ as an upper bound on the total number of requests.

        Then, $d=\Theta(m^{2/3})$ and $m-d=\Theta(d^{3/2})$. Moreover, by the definitions of $N$ it holds that $\log_2 N \le d+\log_2 T+T\log_2 D$. Thus, by the definitions of the parameters $D,T$, and $q$: 
		\[
		\begin{aligned}
			\log_2 N
			\le d+o\left(d^{3/2}\right)+(4+o(1))\sqrt d
			\left(\frac{33}{32}+o(1)\right)d 
			=\left(\frac{33}{8}+o(1)\right)d^{3/2}
			<\frac{m-d}{2}.
		\end{aligned}
		\]
		It follows that $N<2^{m-d}$. 
		For every $t<T$ and every nonempty $A\subseteq[d]$, create $L_t$ requests of $(t,A)$, whose incidence patterns will be defined later in the proof. 
		For every such request $(t,A,\ell)$ 
		define a random subset $R_{t,A,\ell}\subseteq[m]\setminus[d]$, where all of these subsets
		are chosen jointly and uniformly among all choices that are pairwise distinct. 
		Indeed, a pairwise distinct selection exists since $N<2^{m-d}$. Moreover, give each request $(t,A,\ell)$ an independent uniform priority $p_{t,A,\ell} \sim \textnormal{Unif} [0,1]$, inducing a uniform random order of requests.
		
		Let $\mathcal E$ be the {\em bad} event that, for some $t<T$, a request corresponding to level $j>t$ has priority at most $z_t$. Furthermore,
		let $\mathcal B$ be the {\em good} event that (i) $\mathcal E$ does not hold and (ii) for every $t<T$, nonempty $A\subseteq[d]$, and $Y\subseteq[m]\setminus[d]$ with $|Y|\le r$, some request $(t,A,\ell)$ with $p_{t,A,\ell}\le z_t$ satisfies $R_{t,A,\ell}\cap Y=\emptyset$. 
		Intuitively, $\mathcal E$ is the event that a later level arrives too early, while $\mathcal B$ ensures that, no matter which (at most $r$) auxiliary sets from $[m] \setminus [d]$ the algorithm has bought, some level $t$ request cannot be covered by them and requires a set from $A\subseteq[d]$, which reduces to the worst case construction.
		We prove next that this good event $\mathcal{B}$ happens with high probability. 
		
		\begin{claim}
			$ \Pr[\mathcal B] > 9/10$. 
		\end{claim}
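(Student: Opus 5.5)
We bound $\Pr[\neg\mathcal B]\le\Pr[\mathcal E]+\Pr[\text{(ii) fails}]$ and show each term is at most $1/20$ by a union bound.

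\textbf{Bounding $\Pr[\mathcal E]$.} Fix $t<T$. Each request at a level $j>t$ has priority at most $z_t$ with probability exactly $z_t=s/L_t$. The total number of requests at levels $j>t$ is at most $2^d\sum_{j>t}L_j\le 2^d\cdot 2L_{t+1}=2^{d+1}L_t/D$, since $L_j$ decreases geometrically with ratio $D\ge 2$. Hence the expected number of such early requests is at most $2^{d+1}s/D\le 2^{d+1}s/(100s2^dT)=1/(50T)$. Markov's inequality and a union bound over the $T$ levels give $\Pr[\mathcal E]\le 1/50$.

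\textbf{Bounding the failure of (ii).} Fix $t$, a nonempty $A\subseteq[d]$, and $Y\subseteq[m]\setminus[d]$ with $|Y|\le r$. Independence is the delicate point, because the $R_{t,A,\ell}$ are uniform subject to being pairwise distinct. I would handle it by taking the $R$'s i.i.d.\ uniform and conditioning on distinctness. Since $N<2^{m-d}$ by a large margin ($\log_2N\le(m-d)/2$), the birthday bound gives collision probability at most $N^2/2^{m-d+1}=o(1)$. So conditioning changes any event's probability by at most a $1+o(1)$ factor plus $o(1)$. Alternatively, one can reveal the distinct $R$'s sequentially: each is uniform over at least $2^{m-d}-N\ge(1-o(1))2^{m-d}$ remaining choices, so it avoids $Y$ with probability at least $(1-o(1))2^{-r}$ roughly independently of the others.

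For a single request, $\Pr[R_{t,A,\ell}\cap Y=\emptyset]=2^{-|Y|}\ge 2^{-r}$. Intersecting with the event $p_{t,A,\ell}\le z_t$, each of the $L_t$ requests is "good" with probability at least $z_t2^{-r}(1-o(1))$, independently across $\ell$ because priorities are independent. The probability that no request is good is therefore at most
\[
\left(1-z_t2^{-r}(1-o(1))\right)^{L_t}\le \exp\!\left(-(1-o(1))s2^{-r}\right)\le \exp(-99d\log d),
\]
using $s\ge 100\cdot2^rd\log d$.

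Now take a union bound over at most $T\cdot 2^d\cdot (m-d+1)^r$ triples $(t,A,Y)$. Since $r\le d/32$ and $\log m=O(\log d)$, the logarithm of this count is $O(d+r\log d)=O(d\log d)$, with a constant well below $99$. The failure probability of (ii) is then $o(1)<1/20$. Combining the two bounds gives $\Pr[\mathcal B]>9/10$.

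The main obstacle is this second bound, specifically making the dependence among the distinct $R$'s rigorous and checking that the count of sets $Y$, roughly $m^r$ with $m=\Theta(d^{3/2})$, is beaten by the $e^{-s2^{-r}}$ factor. The $d\log d$ factor in $s$ was chosen exactly for this.
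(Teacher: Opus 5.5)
Your proposal is correct and follows essentially the same route as the paper: bound $\Pr[\mathcal E]$ by the expected number of later-level requests with priority at most $z_t$, bound each triple $(t,A,Y)$ by roughly $e^{-s2^{-r}}$, and take a union bound over the at most $T2^d(m-d+1)^r$ triples. The only difference is how the pairwise distinctness of the $R_{t,A,\ell}$ is handled: the paper notes that sampling without replacement can only decrease the probability that every low-priority request meets $Y$, which gives exactly $(1-z_t2^{-|Y|})^{L_t}$, while your sequential-reveal argument loses a harmless $(1-o(1))$ factor (and if you use the conditioning-on-distinctness variant, keep only the multiplicative bound $\Pr[E\mid\text{distinct}]\le\Pr[E]/\Pr[\text{distinct}]$, since an additive $o(1)$ per event would not survive the union bound).
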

		\begin{claimproof}
			Fix $t<T$, nonempty $A\subseteq[d]$, and $Y\subseteq[m]\setminus[d]$ with $|Y|\le r$. For any $\ell\in[L_t]$ we have $\Pr[R_{t,A,\ell}\cap Y=\emptyset]=2^{-|Y|}$. Moreover, sampling the sets $R_{t,A,\ell}$ without replacement can only decrease the probability that all requests $(t,A,\ell)$ with $p_{t,A,\ell}\le z_t$ intersect $Y$, compared to independent sampling. Hence, since each request $(t,A,\ell)$ has probability $z_t2^{-|Y|}$ of having priority at most $z_t$ and satisfying $R_{t,A,\ell}\cap Y=\emptyset$ it holds that
			\begin{equation}
				\label{eq:z_t}
				\Pr\!\Big[
				R_{t,A,\ell}\cap Y\neq\emptyset
				~~\forall \ell \in [L_t]~\text{ s.t. }~p_{t,A,\ell}\le z_t
				\Big]
				\le (1-z_t2^{-|Y|})^{L_t}
				\le e^{-s2^{-r}}.
			\end{equation} The last inequality uses $z_tL_t=s$ and $|Y|\le r$.
			In addition, 
			since $\sum_{j>t}L_j<L_t/(D-1)$ it holds that
			\begin{equation}
				\label{eq:E}
			\Pr[\mathcal E]
			<\sum_{t<T}z_t2^d\sum_{j>t}L_j
			<\sum_{t<T}\frac{s}{L_t}2^d\frac{L_t}{D-1}
			=\frac{Ts2^d}{D-1}
			\le\frac1{100}.
			\end{equation}

			A union bound over all $t<T, A \subseteq [d]$ and $Y\subseteq[m]\setminus[d]$ with $|Y|\le r$ and \eqref{eq:z_t} give an upper bound on the probability of the complement event $\mathcal B^c$: 
			\[
			\begin{aligned}
				\Pr[\mathcal B^c]
				&\le \Pr[\mathcal E]
				+2^dT \sum^r_{i = 0} \binom{m-d}{i} e^{-s2^{-r}} \le \Pr[\mathcal E]+2^dT\left(\frac{e(m-d)}{r}\right)^r e^{-s2^{-r}}\\
				&\le \frac1{100}
				+\exp\!\left(
				d\log 2+\log T
				+r\log\frac{e(m-d)}{r}
				-s2^{-r}
				\right)
				=\frac1{100}+e^{-\Omega(d\log d)}
				<\frac1{10}
			\end{aligned}
			\]
			for a sufficiently large $m$.  
		The second inequality uses the standard bound
$
\sum_{i=0}^r \binom{m-d}{i}\le \left(\frac{e(m-d)}{r}\right)^r.
$
			The third inequality uses \eqref{eq:E}. The last equality uses $T=O(\sqrt d)$, $r=\Theta(d)$, $m-d=\Theta(d^{3/2})$, and $s2^{-r}\ge100d\log d$, so the exponent is $-\Omega(d\log d)$.
			Hence, $
			\Pr[\mathcal B]>\frac9{10}$. 
		\end{claimproof}

		We now define the incidence patterns of the requests using a random filtration system, similarly to the worst-case proof. 
		Fix, by \Cref{thm:union-expansion-main}, a $(d,k,q,r)$-union-expanding family $\mathcal C^{(0)}\subseteq\binom{[d]}k$ of size $2^T$, and for each $t < T$ choose $\mathcal C^{(t+1)}$ uniformly among the halves of $\mathcal C^{(t)}$. Recall the forcing batches $F_r(\mathcal C^{(t)})$ as in the worst-case proof. 
		Define  the $\ell$-th request of $(t,A)$ as the element $u_{\rho}$, where
		\[
		\rho=
		\begin{cases}
			A\cup R_{t,A,\ell}, & \text{if } u_A\in F_r(\mathcal C^{(t)}),\\
			[d]\cup R_{t,A,\ell}, & \text{otherwise}.
		\end{cases}
		\]
		Since the subsets $R_{t,A,\ell}$ are pairwise distinct and contained in $[m]\setminus[d]$, all requested elements have distinct incidence patterns; hence, there are no element copies. The requests in the above ``otherwise" case are dummy requests, included so that the event $\mathcal B$ is defined independently of the random filtration, simplifying the analysis. Assigning these dummy requests incidence $[d]\cup R_{t,A,\ell}$ ensures they impose no additional constraint on the optimal solution.

		To use Yao's minimax principle, fix a deterministic algorithm. Let $Z_t$ be the sets purchased by the algorithm after all requests with priority at most $z_t$ have arrived, and let $Z_T$ be its final set of purchases.
		Since $z_0<z_1<\cdots<z_{T-1}$, we have
		$
		Z_0\subseteq Z_1\subseteq\cdots\subseteq Z_{T-1}\subseteq Z_T.
		$
		Fix $t<T$ and nonempty $A\subseteq[d]$, and suppose that $|Z_t|\le r$ and $u_A\in F_r(\mathcal C^{(t)})$.
        We next prove that the good event implies that $Z_1,\ldots,Z_T$ satisfy the conditions of \Cref{lem:filtration}. 

        \begin{claim}
        \label{claim:conditions}
Conditioned on $\mathcal B$, the sequence $Z_0\subseteq\cdots\subseteq Z_T$ satisfies:
\begin{enumerate}
    \item For every $t<T$, $Z_t$ is determined before $\mathcal C^{(t+1)}$ is chosen.
    \item Conditioned on the history up to time $t$, $\mathcal C^{(t+1)}$ remains a uniformly random half of $\mathcal C^{(t)}$.
    \item Whenever $|Z_t|\le r$, there exists $W_t\in\mathcal C^{(t)}$ with $W_t\subseteq Z_t$.
\end{enumerate}
\end{claim}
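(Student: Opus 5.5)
The plan is to handle the three properties separately, noting first that the event $\mathcal B$ is defined only through the priorities $p_{t,A,\ell}$ and the auxiliary patterns $R_{t,A,\ell}$. Both are sampled independently of the filtration $(\mathcal C^{(t)})_{t\le T}$, so conditioning on $\mathcal B$ does not change the law of the filtration. This independence is exactly why the dummy requests were introduced. I would fix an outcome of priorities and $R$'s in $\mathcal B$ and argue for the conditional filtration process.

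\textbf{Properties 1 and 2.} On $\mathcal B$ the event $\mathcal E$ fails. So every request with priority at most $z_t$ belongs to a level $j\le t$. The incidence pattern of a level-$j$ request depends only on $R_{j,A,\ell}$ and on whether $u_A\in F_r(\mathcal C^{(j)})$. Hence the prefix of requests arriving up to threshold $z_t$ is a function of $\mathcal C^{(0)},\ldots,\mathcal C^{(t)}$ and the fixed priorities and $R$'s. Since the algorithm is deterministic, $Z_t$ is determined before $\mathcal C^{(t+1)}$ is sampled, which gives (1).

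For (2), the history up to time $t$ is measurable with respect to $\mathcal C^{(0)},\ldots,\mathcal C^{(t)}$ given the fixed priorities and $R$'s. By construction $\mathcal C^{(t+1)}$ is a uniform half of $\mathcal C^{(t)}$ chosen independently of everything else. It therefore remains a uniform random half conditioned on this history.

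\textbf{Property 3.} Suppose $|Z_t|\le r$ and, for contradiction, no $W\in\mathcal C^{(t)}$ satisfies $W\subseteq Z_t$. Set $X:=Z_t\cap[d]$; then $|X|\le r$ and no $W\in\mathcal C^{(t)}$ is contained in $X$, since each $W\subseteq[d]$. As in the proof of \Cref{lem:forcing-batch}, let $A:=[d]\setminus X$. Then $A$ meets every $W\in\mathcal C^{(t)}$ and $|[d]\setminus A|\le r$, so $u_A\in F_r(\mathcal C^{(t)})$. Here the forcing batch is taken in the ground set $[d]$. Note that $A$ is nonempty because $A$ intersects some $W$.

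Now take $Y:=Z_t\setminus[d]$, which satisfies $|Y|\le r$. Part (ii) of $\mathcal B$ gives a request $(t,A,\ell)$ with $p_{t,A,\ell}\le z_t$ and $R_{t,A,\ell}\cap Y=\emptyset$. Its pattern is $A\cup R_{t,A,\ell}$, which is disjoint from $Z_t=X\cup Y$. This request has arrived by threshold $z_t$ but is uncovered by $Z_t$, contradicting feasibility of the algorithm.

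\textbf{Main obstacle.} The delicate step is the measurability and independence in (1)–(2). One must be careful that conditioning on $\mathcal B$ does not bias the filtration. It does not, because $\mathcal B$ is independent of it. One must also check that the arrival order within the prefix up to $z_t$ reveals nothing about $\mathcal C^{(t+1)}$. This holds because, on $\neg\mathcal E$, no later-level request appears in that prefix.
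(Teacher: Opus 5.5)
Your proposal is correct and follows the paper's proof. For items 1--2 you use, as the paper does, that $\mathcal B$ is independent of the filtration and that $\mathcal B\subseteq\mathcal E^c$; for item 3 you apply part (ii) of $\mathcal B$ with $Y=Z_t\setminus[d]$. The only difference is cosmetic: you inline the converse direction of \Cref{lem:forcing-batch} by taking $A=[d]\setminus(Z_t\cap[d])$ directly and exhibiting an arrived request that $Z_t$ fails to cover. The paper instead first shows that $Z_t\cap[d]$ covers all of $F_r(\mathcal C^{(t)})$ over ground set $[d]$, as you correctly note, and then invokes the lemma.
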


\begin{proof}
For items 1 and 2, because $\mathcal B$ is independent of the
    filtration and $\mathcal B\subseteq\mathcal E^c$, conditioned on
    $\mathcal B$, for every $t<T$, $Z_t$ is determined before
    $\mathcal C^{(t+1)}$ is chosen, while $\mathcal C^{(t+1)}$ remains
    a uniformly random half of $\mathcal C^{(t)}$.
    For item 3, fix $t<T$ such that $|Z_t|\le r$. For every
    $u_A\in F_r(\mathcal C^{(t)})$, let
    $Y:=Z_t\cap([m]\setminus[d])$. Since $|Y|\le r$, by $\mathcal B$
    there exists a request $(t,A,\ell)$ with priority at most $z_t$
    such that $Y\cap R_{t,A,\ell}=\emptyset$. Since $Z_t$ covers this
    request, necessarily $Z_t\cap A\neq\emptyset$. Thus,
    $Z_t\cap[d]$ covers every element of $F_r(\mathcal C^{(t)})$.
    Hence, \Cref{lem:forcing-batch} gives
    $W_t\in\mathcal C^{(t)}$ with
    $W_t\subseteq Z_t\cap[d]\subseteq Z_t$.
\end{proof}

         By \Cref{claim:conditions}, conditioned on $\mathcal B$, the sequence $Z_0\subseteq\cdots\subseteq Z_T$ satisfies the hypothesis of \Cref{lem:filtration}. 
        Thus, \Cref{lem:filtration} gives, by viewing $\mathcal C^{(0)}\subseteq\binom{[d]}k$ as a $(m,k,q,r)$-union-expanding family in $\binom{[m]}k$, that 
\[
\mathbb E[|Z_T|\mid\mathcal B]
\ge
(1-e^{-q/4})r.
\]
Therefore,
		\[
		\mathbb E[|Z_T|]
		\ge
		\Pr[\mathcal B] \cdot \mathbb E[|Z_T|\mid\mathcal B]
		\ge
		\frac9{10}(1-e^{-q/4})r.
		\]
		Since the unique candidate cover in $\mathcal C^{(T)}:=\{W^\star\}$ belongs to $\mathcal C^{(t)}$ for every $t$, it intersects every $A$ for which $u_A\in F_r(\mathcal C^{(t)})$, as well as the dummy requests $[d]$. Hence, $W^\star$ covers every request and $\OPT\le k$. Since $r=\Theta(d)$ and $k=\Theta(\sqrt d)$, the resulting competitive ratio is $\Omega(\sqrt d)=\Omega(m^{1/3})$. 
		Yao's minimax principle therefore implies an $\Omega(m^{1/3})$ lower bound for randomized algorithms in the random order. 
	\end{proof}
	
	\paragraph{With element copies.}
	If element copies are allowed, the same argument gives the stronger $\Omega(\sqrt m)$ bound. We take $d=m$ and replace the elements of $(t,A)$ by distinct copies of each incidence pattern $A$. Using
	$
	n\le 2^m \cdot N=\exp(O(m^{3/2}))
	$
	elements, with high probability every relevant incidence pattern has a copy appearing before its level threshold, while no later level request appears too early. Thus, the same filtration argument gives expected cost $\Omega(m)$, whereas $\OPT=O(\sqrt m)$, yielding an $\Omega(\sqrt m)$ lower bound on the competitive ratio.
	
	\subsection{Bounded Space Lower Bound}
	\label{sec:space-proof}
	
	We now prove \Cref{thm:space-lb}. The proof is based on a reduction from one-way set-disjointness~\cite{ChakrabartiKhotSun03}.
    \paragraph{Multy-party one-way set-disjointness.} In the $q$-party one-way set-disjointness problem over universe $[N]$, player $1\le i\le q$ receives a set $S_i\subseteq[N]$, under the promise that one of the following cases hold: (i) the sets $S_1,\ldots,S_q$ are pairwise disjoint, or (ii) there is a unique element contained in all $q$ sets and the sets are pairwise disjoint if this element is removed. The players communicate in the following order: player $i$ sends a message to player $i+1$, and the last player $q$ outputs the answer. 
		The communication cost of a protocol is the maximum total number of bits sent between consecutive players.
		By the work of Chakrabarti, Khot, and Sun~\cite{ChakrabartiKhotSun03}, every randomized one-way protocol, even if all players have access to shared randomness, that distinguishes these two cases on any input with probability at least $2/3$ has communication cost $\Omega(N/q)$.
    
    \vspace{1em}
    
The proof of \Cref{thm:space-lb} reduces $q$-party one-way set disjointness to online set cover. Each element $s$ of the disjointness universe is represented by a candidate cover $W_s$ from a union-expansion family constructed using the second set of parameters in \Cref{thm:union-expansion-main}. Thus, each player replaces its input set $S_i$ by the corresponding candidate family ${W_s:s\in S_i}$ and presents its forcing batch (defined by Lemma~\ref{lem:forcing-batch}) to the online algorithm. Technically, we also add a distinct candidate $V_i$ to each player's family, ensuring that its forcing batch can always be covered. In the pairwise-disjoint case, union-expansion forces a large cost; conversely, if all sets share an element $s^\star$, then $W_{s^\star}$ covers every forcing batch, so $\OPT$ is small. Recall that memory is the information retained between requests; past requests cannot be reread unless explicitly stored in this memory.

	\spacelb*
	
	\begin{proof}
We start by constructing a union-expanding family. Let $d \ge 1$ be some fixed integer and assume that $m$, the number of sets, is sufficiently large. Let
		$
		k:=\left\lceil2(d+3)\log m\right\rceil$
		$q:=\left\lfloor\frac{m}{k}\right\rfloor$,
		and $r:=\left\lfloor\frac m{32}\right\rfloor$. Also
		define $N:=m^{d+3}$ and $M:=N+q$. 
		By \Cref{thm:union-expansion-main}, there is an $(m,k,q,r)$-union-expanding family $\mathcal{C}$ of size $M = N+q$.
		Label the members of $\mathcal{C}$ arbitrarily as
		$$
		\mathcal C=\{W_1,\ldots,W_N,V_1,\ldots,V_q\}.$$
    
		We reduce $q$-party one-way set-disjointness to online set cover as follows. Fix a randomized algorithm $\mathcal A$ for online set cover such that for all sufficiently large $m$ it holds that $\mathcal A$ uses at most $m^d$ bits of memory.
		Suppose, toward a contradiction, that $\mathcal A$ is $\alpha$-competitive for some $\alpha<r/(3k)$. 
        Now,
        consider an instance of $q$-party one-way set-disjointness with universe $[N]$ where each player $1\le i\le q$ receives a set $S_i \subseteq [N]$. The $q$ players use public randomness to simulate $\mathcal A$ as follows. 
        
        For each $1\le i\le q$, player $i$ defines
		$
		\mathcal D_i:=\{W_s:s\in S_i\}\cup\{V_i\}.
		$
		For $i=1$, player $1$ starts the simulation of $\mathcal A$ from the initial memory state of $\mathcal A$ with no sets purchased. For every $1<i\le q$, player $i$ receives from player $i-1$ two objects: the memory contents of $\mathcal A$ immediately after processing previous requests, and the vector indicating which of the $m$ sets have been purchased by $\mathcal A$ up to that point. Starting from this memory state, player $i$ simulates $\mathcal A$, request by request on the batch $F_r(\mathcal D_i)$, as defined by Lemma~\ref{lem:forcing-batch}. Whenever $\mathcal A$ purchases a set during this simulation, player $i$ updates the vector of purchased sets accordingly. If $i<q$, after all requests in $F_r(\mathcal D_i)$ have been processed, player $i$ sends to player $i+1$ the resulting memory contents of $\mathcal A$ together with the resulting vector of purchased sets.
		Let $X$ be the collection of all sets purchased by the algorithm $\mathcal{A}$ at termination. The last player outputs ``pairwise disjoint'' if $|X|>r$ and ``not disjoint'' otherwise.  
		
		Each of the first $q-1$ players sends at most $m^d$ bits describing the memory state of $\mathcal A$ and $m$ bits describing the vector of purchased sets. Hence, the total communication is at most
		$
		(q-1)(m^d+m)\le 2m^{d+1},
		$
		since $q\le m$ and $d\ge1$. We consider two cases for the protocol input. 
		
		\begin{enumerate}
			
			\item[(i)] The $(S_i)_i$ are pairwise disjoint. Then, $\mathcal D_1,\ldots,\mathcal D_q$ are pairwise disjoint. Assume, toward a contradiction, that $|X|\le r$. Since $X$ covers every batch $F_r(\mathcal D_i)$, by \Cref{lem:forcing-batch}, for every $i\in[q]$ there exists $U_i \in \mathcal D_i$ such that $U_i\subseteq X$. These $q$ candidates are distinct (but not necessarily disjoint); therefore, by the union-expansion property (Definition~\ref{def:union-expansion}) it holds that
			$
	|X|\ge\left|\bigcup_{i=1}^qU_i\right|>r.
			$ Contradiction.
			Thus, $|X|>r$, in the pairwise-disjoint case.
			
			\item[(ii)] The sets $(S_i)_i$ have a common element $s^\star$. Then, $W_{s^\star}\in\mathcal D_i$ for every $i$. As $|W_{s^\star}|=k\le r$, it follows from \Cref{lem:forcing-batch}  that $W_{s^\star}$ covers every forcing batch; therefore, $\OPT\le k$. Since $\mathbb E[|X|] \ge r \cdot \Pr[|X|>r]$,  
			\[
			\Pr[|X|>r]\le\frac{\mathbb E[|X|]}r\le \frac{\alpha\cdot \OPT}{r} \le \frac{\alpha k}{r}<\frac13.
			\]
		\end{enumerate}

		By the above, the simulation is a one-way protocol with success probability at least $2/3$. However,
		$
		\frac Nq\ge m^{d+2},
		$
		whereas its communication is at most $2m^{d+1}$, contradicting the $\Omega(N/q)$ lower bound of~\cite{ChakrabartiKhotSun03} for sufficiently large $m$. Consequently,
		$
		\alpha\ge\frac{r}{3k}=\Omega_d\!\left(\frac{m}{\log m}\right).
		$
	\end{proof}
	
	\section{Discussion and Open Questions}
	\label{sec:discussion}
	
	In this paper, we prove an $\Omega(\sqrt m)$ lower bound on the competitive ratio of every randomized information-theoretic algorithm for online set cover, matching the $\widetilde O(\sqrt m)$ upper bound of~\cite{Korman04} up to a logarithmic factor.
	The result holds in both the known and unknown instance models with $n=2^m-1$. 
	Our result also rules out an information-theoretic randomized algorithm that can match the competitive ratio of a fractional solution to the standard set cover LP. This shows that either a polylogarithmic dependence on $n$ or a polynomial dependence on $m$ is unavoidable.
	The central open question left for future work is to determine whether randomization in conjunction with unlimited computational power can break the $O(\log m\log n)$ upper bound of~\cite{AlonAwerbuchAzarBuchbinderNaor09} in the regime where $m$ and $n$ are polynomially related.
	More specifically, we leave the following question open.
	
	\begin{question}
		\label{q:log-mn}
		Does adversarial-order online set cover admit an information-theoretic randomized algorithm (against an oblivious adversary) with competitive ratio
		\(
		O(\log(mn))=O(\log m+\log n)?
		\)
	\end{question}
	
	\paragraph{Random-order.} in the random-order model, Gupta, Kehne, and Levin~\cite{GuptaKehneLevin21} showed that online set cover admits a polynomial-time randomized $O(\log(mn))$-competitive algorithm.
	\Cref{thm:random-order-lb} shows that, despite this improvement over adversarial order, a polynomial dependence of $\Omega(m^{1/3})$ remains unavoidable when $n$ is sufficiently large, and strengthens to $\Omega\left(\sqrt m\right)$ with element copies and  $n=\exp(O(m^{3/2}))$. 
	It remains open whether this can be improved to $\widetilde\Omega(\sqrt m)$ without element copies.
	
	\paragraph{Polynomial Space.} We also separate information-theoretic and polynomial-size memory algorithms, by showing a lower bound of $\Omega(m/\log m)$ for the latter, even when computation between requests is unlimited. An interesting question for future work is to answer whether there is an $\omega(\sqrt{m})$ lower bound for $\textnormal{poly}(m)$ time algorithms per request but allowing exponential space complexity. It is also unclear whether Theorem~\ref{thm:space-lb} holds in the random-order model. 
	
	\paragraph{Acknowledgments.}
	I. Doron-Arad is supported by grant NSF DMS-2031883 and Vannevar Bush Faculty Fellowship ONR-N00014-20-1-2826 (PI Mossel). J. Naor is Supported in part by ISF grant 3001/24 and United States– Israel BSF grant 2022418.

\end{document}